\DeclareMathAlphabet{\mathcal}{OMS}{cmsy}{m}{n}
\theoremstyle{plain}
\theoremstyle{definition}
\newtheorem{definition}{Definition}
\newtheorem{example}{Example}
\theoremstyle{remark}
\tikzstyle{event} = [
\newcommand{\pomsetbox}[2][dummy]{%
  \node[draw, inner ysep = 5pt, inner xsep = 3pt, rounded corners,
        fit = #2] (#1) {};%
}
\newcommand{\pomsetboxb}[3][1]{%
  \begin{pgfonlayer}{b#1}%
  \node[draw, inner ysep = 5pt, inner xsep = 3pt, rounded corners, fill=white,
        fit = #3] (#2) {};%
  \end{pgfonlayer}
}
\newcommand{\choicebox}[2][dummy]{
  \node[draw, inner sep = 5pt, 
        rounded corners, thick, fit = #2] (#1) {};
  \node[thick, fill = white, inner sep = 1pt, anchor = west, xshift = 6pt] at (#1.north west) {\tiny\textsc{Choice}};
}
\newcommand{\choiceboxb}[3][1]{
  \begin{pgfonlayer}{b#1}
  \node[draw, inner sep = 5pt, fill=blue!15,
        rounded corners, thick, fit = #3] (#2) {};
  \node[thick, fill = blue!15, inner sep = 1pt, anchor = west, xshift = 6pt, rounded corners = 2pt] at (#2.north west) {\tiny\textsc{Choice}};
  \end{pgfonlayer}
}
\definecolor{webgreen}{rgb}{0,.5,0}
\definecolor{webbrown}{rgb}{.6,0,0}
\definecolor{opcolor}{rgb}{0.4,0,0}
\newcommand{\nil}{\mathbf{0}}
\newcommand{\pt}[1]{\ensuremath{\textcolor{blue}{\mathsf{#1}}}\xspace}
\newcommand{\ms}[1]{\ensuremath{\textcolor{webgreen}{\texttt{\upshape{#1}}}}\xspace}
\newcommand{\subj}[1]{\ensuremath{\mathit{subj}(#1)}\xspace}
\newcommand{\isFinal}{{\downarrow}}
\newcommand{\act}[3]{\pt{#1}{}\textcolor{opcolor}{#2}\ms{#3}}
\newcommand{\alt}{\mathbin{\normalfont{\texttt{+}}}}
\newcommand{\seq}{\mathbin{;}}
\DeclareRobustCommand{\cod}[1]{
  \IfSubStr{#1}{->}{
    \pt{\StrBefore{#1}{-}}
    \textcolor{opcolor}{\shortrightarrow}
    \pt{\StrBetween{#1}{>}{:}}
    {:}
    \ms{\StrBehind{#1}{:}}
  }{
    \IfSubStr{#1}{?}{
      \pt{\StrBefore{#1}{?}}
      \textcolor{opcolor}{?}
      \ms{\StrBehind{#1}{?}}
    }{
      \IfSubStr{#1}{!}{
        \pt{\StrBefore{#1}{!}}
        \textcolor{opcolor}{!}
        \ms{\StrBehind{#1}{!}}
      }{
        \smurfsmurf
      }
    }
  }
}
\DeclareRobustCommand{\cods}[1]{\scriptstyle\cod{#1}}
\newcommand{\tr}[2][t]{\xrightarrow{\smash[#1]{#2}}}
\newcommand{\trpt}[2][t]{\tr[#1]{\checkmark_{\!\!#2}}} 
\newcommand{\trc}[2][t]{\tr[#1]{\checkmark_{\!\!#2}}} 
\newcommand{\set}[1]{\{#1\}}
\newcommand{\tpl}[1]{\langle #1 \rangle}
\newcommand{\sem}[1]{\llbracket #1 \rrbracket}
\newcommand{\dd}{\leq} 
\newcommand{\Npom}{Branching pomset\xspace}
\newcommand{\Npoms}{Branching pomsets\xspace}
\newcommand{\npom}{branching pomset\xspace}
\newcommand{\npoms}{branching pomsets\xspace}
\newcommand{\A}{\mathcal{A}}
\newcommand{\X}{\mathcal{X}}
\newcommand{\N}{\mathcal{B}}
\newcommand{\C}{\mathcal{C}}
\renewcommand{\L}{\mathcal{L}}
\newcommand{\amin}{\text{a-min}}
\newcommand{\rulerefl}{\textsc{Refl}}
\newcommand{\ruletrans}{\textsc{Trans}}
\newcommand{\rulecongr}{\textsc{Congr}}
\newcommand{\rulechoice}{\textsc{Choice}}
\newcommand{\wrap}[1]{\begin{tabular}{@{}c@{}}#1\end{tabular}}
\newcommand{\mwrap}[1]{\begin{array}{@{}c@{}}#1\end{array}}
\newcommand{\bisim}{\mathcal{R}}
\title{
  Branching Pomsets for Choreographies
}
\author{
  Luc Edixhoven \qquad\qquad Sung-Shik Jongmans
  \institute{
    Open University (Heerlen) and\\CWI (Amsterdam), Netherlands}
  \email{\{led,ssj\}@ou.nl}
  \and
  Jos\'e Proen\c{c}a
  \institute{CISTER, ISEP,\\Polytechnic Institute of Porto, Portugal}
  \email{pro@isep.ipp.pt}
  \and
  Guillermina Cledou
  \institute{HASLab, INESC TEC \& University of Minho, Portugal}
  \email{mgc@inesctec.pt}
}
\begin{document}

\maketitle

\begin{abstract}


  Choreographic languages describe possible sequences of interactions among a set of agents. Typical models are based on languages or automata over sending and receiving actions. Pomsets provide a more compact alternative by using a partial order over these actions and by not making explicit the possible interleaving of concurrent actions.
  %
  However, pomsets offer no compact representation of choices. For example, if an agent Alice can send one of two possible messages to Bob three times, one would need a set of $2 \times 2 \times 2$ distinct pomsets to represent all possible branches of Alice's behaviour.
  This paper proposes an extension of pomsets, named \emph{\npoms{}}, with a branching structure that can represent
  Alice's behaviour using $2 + 2 + 2$ ordered actions.
  %
  We encode choreographies as \npoms and show that the pomset semantics of the encoded choreographies are bisimilar to their operational semantics.
\end{abstract}

\section{Introduction}
\label{sec:intro}



Choreographic languages describe possible sequences of interactions, or communication protocols, \linebreak among a set of agents.
Their use is well established~\cite{ITU-T:MSC,DBLP:journals/tcs/AlurEY05,DBLP:conf/popl/HondaYC08,DBLP:journals/jacm/HondaYC16,DBLP:conf/popl/CarboneM13,DBLP:journals/tcs/Cruz-FilipeM20}, and it typically includes:

\begin{enumerate}
	\item reasoning statically over interaction properties;
	
	\item generating code that facilitates the implementation of the concurrent protocols.
\end{enumerate}

\noindent
Regarding 1,
static properties include deadlock absence or the equivalence between global protocols and the parallel composition of local protocols for each agent.
Regarding 2,
the code generated from choreographic languages include skeleton code for concurrent code, generated behavioural types that can be used to type-check agents, or dedicated orchestrators that dictate how the agents can interact.
In this work we focus on how to analyse choreographies by proposing a new structure to compactly represent their behaviour, based on \emph{partial-ordered multisets} (pomsets).

We foresee applications of this work in both aforementioned areas.
Regarding 1,
in static analysis, a more compact model of choreographies could reduce the complexity of the analysis of protocols featuring both concurrency and choices.
 Our work in progress in this area focuses on realisability.
Regarding 2,
our recent work includes API generation using an approach based on traditional sets of pomsets~\cite{DBLP:conf/ecoop/CledouEJP22}.
We are keen to extend it to take full advantage of the branching structure presented in this paper.

We use two simple running examples to motivate our approach.

\pagebreak

\begin{enumerate}
  \item[\textbf{1.}] \textbf{Master-workers (MW) protocol~\cite{DBLP:conf/cc/NeykovaY17}.}
  A \emph{master} (\pt{m}) concurrently sends \emph{tasks} (\ms{t}) to a number of \emph{workers} ($\pt{w_1}, \ldots, \pt{w_n}$).
  Once workers finish their task, they inform the master that they are \emph{done} (\ms{d}).
  This protocol is expressed in our choreographic language as follows for the case of two workers.
    $$(\cod{m->w_1:t} \seq \cod{w_1->m:d}) ~\parallel~ (\cod{m->w_2:t} \seq \cod{w_2->m:d}).$$
  Here, $\cod{m->w_1:t}$ represents an asynchronous communication from \pt{m} to \pt{w_1} of a message of type \ms{t}, `$\seq$' represents sequential composition and `$\parallel$' represents parallel composition.

  \item[\textbf{2.}] \textbf{Distributed voting (DV) protocol.}
  Three participants -- Alice ($\pt{a}$), Bob ($\pt{b}$) and Carol ($\pt{c}$) -- send their vote (\emph{yes} ($\ms{y}$) or \emph{no} ($\ms{n}$)) to every other participant in parallel. This is expressed as follows, where $\alt$ indicates nondeterministic choice.
  $$
  \left(\mwrap{(\cod{a->b:y} \parallel \cod{a->c:y}) \alt\\
               (\cod{a->b:n} \parallel \cod{a->c:n})~~\,}\right)
  \parallel
  \left(\mwrap{(\cod{b->a:y} \parallel \cod{b->c:y}) \alt\\
               (\cod{b->a:n} \parallel \cod{b->c:n})~~\,}\right)
  \parallel
  \left(\mwrap{(\cod{c->a:y} \parallel \cod{c->b:y}) \alt\\
               (\cod{c->a:n} \parallel \cod{c->b:n})~~\,}\right)
  $$
\end{enumerate}




A protocol can evolve by performing sequences of sending and receiving actions. E.g., \act{ab}!x denotes a sending action from \pt{a} to \pt{b} with a message of type \ms{x}, and \act{ab}?x denotes the dual receiving action on \pt{b}.
Protocols with parallel interactions can have an explosion of states, such as our MW protocol, whose full state machine can be found on the left of \Cref{fig:automaton-vs-pomset}. To avoid this explosion, the state space can be represented more compactly using so-called \emph{partially ordered multisets}, or simply pomsets~\cite{DBLP:conf/fbt/KatoenL98,DBLP:journals/jlap/GuancialeT19}.
The right of \Cref{fig:automaton-vs-pomset} shows a graphical pomset representation of the same MW protocol.
The pomset contains eight events, whose labels are shown.
The arrows visualise the partial order: an event precedes any other event to which it has an outgoing arrow, either directly or transitively.
In this example, the event with label $\cod{mw_1!t}$ precedes the event with label $\cod{mw_1?t}$ directly and the events with labels $\cod{w_1m!d}$ and $\cod{w_1m?d}$ transitively.
However, it is independent of the events involving $\pt{w_2}$.

The behaviour represented by a pomset is the set of all its linearisations, i.e., all sequences of the labels of its events that respect their partial order.
The set of linearisations of the pomset in \Cref{fig:automaton-vs-pomset} consists of all interleavings of the two threads $\cod{mw_1!t} \; \cod{mw_1?t} \; \cod{w_1m!d} \; \cod{w_1m?d}$ and $\cod{mw_2!t} \; \cod{mw_2?t} \; \cod{w_2m!d} \; \cod{w_2m?d}$.
This explicit concurrency yields a compact representation of the possible interleavings using just $4 + 4$ events, whereas the state machine needs $5 \times 5$ states to represent all interleavings.
If we were to add a third worker, the automaton would grow by another factor 5, while the pomset would expand by just four additional events.


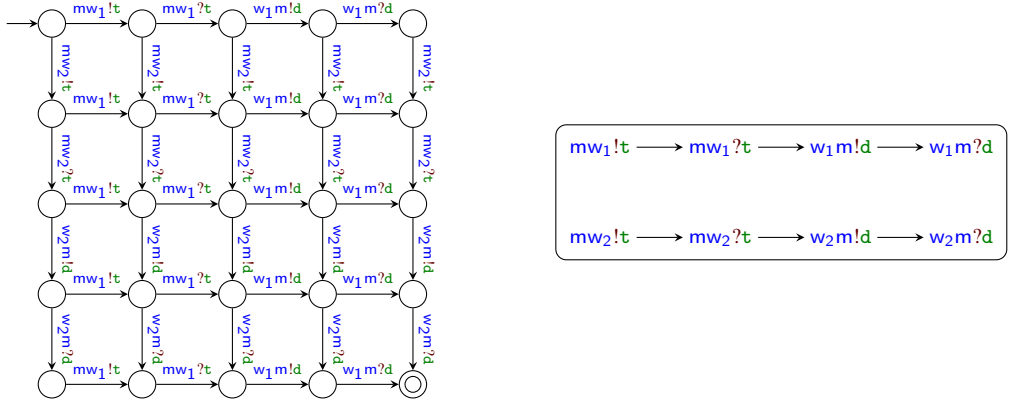
\begin{figure}[t]
  \hfill
  \wrap{
\begin{tikzpicture}[->, >=stealth, scale = .6, baseline = (40)]
  \tikzstyle{state} = [draw, circle, minimum height = 6pt]
  \tikzstyle{hor}=[above,inner sep=2pt]
  \tikzstyle{ver}=[hor,rotate=-90]

  \foreach \x in {0,1,2,3,4}
  \foreach \y in {0,1,2,3,4}
     \node[state] (\x\y) at (2*\x,-2*\y){};
  \draw (44) circle (5pt);
  \draw (-1,0) -- (00);

  \foreach \x in {0,1,2,3,4} {
    \draw (0\x) -- node[hor] {$\scriptscriptstyle\cod{mw_1!t}$} (1\x);
    \draw (1\x) -- node[hor] {$\scriptscriptstyle\cod{mw_1?t}$} (2\x);
    \draw (2\x) -- node[hor] {$\scriptscriptstyle\cod{w_1m!d}$} (3\x);
    \draw (3\x) -- node[hor] {$\scriptscriptstyle\cod{w_1m?d}$} (4\x);
    \draw (\x0) -- node[ver] {$\scriptscriptstyle\cod{mw_2!t}$} (\x1);
    \draw (\x1) -- node[ver] {$\scriptscriptstyle\cod{mw_2?t}$} (\x2);
    \draw (\x2) -- node[ver] {$\scriptscriptstyle\cod{w_2m!d}$} (\x3);
    \draw (\x3) -- node[ver] {$\scriptscriptstyle\cod{w_2m?d}$} (\x4);
   }
\end{tikzpicture}}
  \hfill
  \wrap{
\def\h{1.6}
\def\v{.6}
\begin{tikzpicture}[->, >=stealth]
  \node[event] (mw1!) at (0, \v) {$\cods{mw_1!t}$};
  \node[event] (mw1?) at (\h, \v) {$\cods{mw_1?t}$};
  \node[event] (w1m!) at (2*\h, \v) {$\cods{w_1m!d}$};
  \node[event] (w1m?) at (3*\h, \v) {$\cods{w_1m?d}$};
  \node[event] (mw2!) at (0, -\v) {$\cods{mw_2!t}$};
  \node[event] (mw2?) at (\h, -\v) {$\cods{mw_2?t}$};
  \node[event] (w2m!) at (2*\h, -\v) {$\cods{w_2m!d}$};
  \node[event] (w2m?) at (3*\h, -\v) {$\cods{w_2m?d}$};

  \draw (mw1!) -- (mw1?);
  \draw (mw1?) -- (w1m!);
  \draw (w1m!) -- (w1m?);
  \draw (mw2!) -- (mw2?);
  \draw (mw2?) -- (w2m!);
  \draw (w2m!) -- (w2m?);

  \pomsetbox{(mw1!) (w2m?)}
\end{tikzpicture}}
  \hfill~

  \caption{An automaton (left) and a pomset (right) representing the master-workers protocol. 
  \label{fig:automaton-vs-pomset}}
\end{figure}

While pomsets can compactly represent concurrent behaviour, \textbf{choices} need to be represented as \emph{sets} of pomsets: one for every branch.
As a consequence, one might need an exponential number of pomsets to represent a protocol with many choices.
The exponential growth is visible in our DV protocol with three participants, depicted on the left side of \Cref{fig:pomset-vs-branching-pomset}. This diagram represents a set of pomsets that capture the protocol's possible behaviour, counting $2 \times 2 \times 2$ different pomsets.
If we were to add a fourth participant, the set would grow by another factor 2.

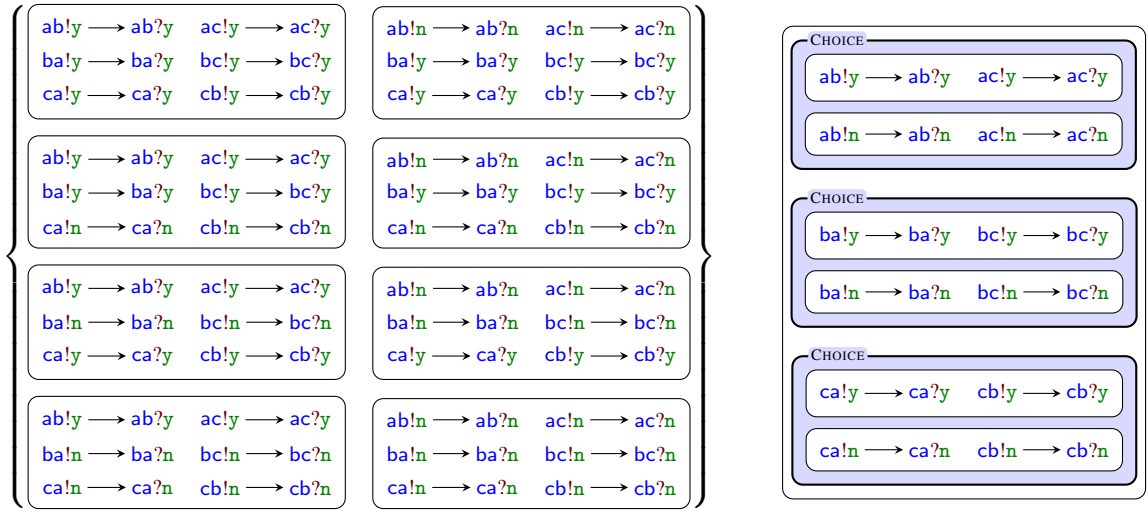
\begin{figure}[t]
  \hspace{\stretch{1}}
  $
  \def\v{0.6}
  \def\h{1.6}
  \def\hm{1.2} 
  \begin{Bmatrix}
    \begin{tikzpicture}[->, >=stealth, baseline = (anchor), scale = .75]
      \node (anchor) at (0,0) {};

      \node[event] (ab!) at (0,\v) {$\cods{ab!y}$};
      \node[event] (ab?) at (\h,\v) {$\cods{ab?y}$};
      \node[event] (ac!) at (\h+\hm,\v) {$\cods{ac!y}$};
      \node[event] (ac?) at (2*\h+\hm,\v) {$\cods{ac?y}$};
      \node[event] (ba!) at (0,0) {$\cods{ba!y}$};
      \node[event] (ba?) at (\h,0) {$\cods{ba?y}$};
      \node[event] (bc!) at (\h+\hm,0) {$\cods{bc!y}$};
      \node[event] (bc?) at (2*\h+\hm,0) {$\cods{bc?y}$};
      \node[event] (ca!) at (0,-\v) {$\cods{ca!y}$};
      \node[event] (ca?) at (\h,-\v) {$\cods{ca?y}$};
      \node[event] (cb!) at (\h+\hm,-\v) {$\cods{cb!y}$};
      \node[event] (cb?) at (2*\h+\hm,-\v) {$\cods{cb?y}$};

      \draw (ab!) -- (ab?);
      \draw (ac!) -- (ac?);
      \draw (ba!) -- (ba?);
      \draw (bc!) -- (bc?);
      \draw (ca!) -- (ca?);
      \draw (cb!) -- (cb?);

      \pomsetbox{(ab!) (cb?)}
    \end{tikzpicture}
    &
    \begin{tikzpicture}[->, >=stealth, baseline = (anchor), scale = .75]
      \node (anchor) at (0,0) {};

      \node[event] (ab!) at (0,\v) {$\cods{ab!n}$};
      \node[event] (ab?) at (\h,\v) {$\cods{ab?n}$};
      \node[event] (ac!) at (\h+\hm,\v) {$\cods{ac!n}$};
      \node[event] (ac?) at (2*\h+\hm,\v) {$\cods{ac?n}$};
      \node[event] (ba!) at (0,0) {$\cods{ba!y}$};
      \node[event] (ba?) at (\h,0) {$\cods{ba?y}$};
      \node[event] (bc!) at (\h+\hm,0) {$\cods{bc!y}$};
      \node[event] (bc?) at (2*\h+\hm,0) {$\cods{bc?y}$};
      \node[event] (ca!) at (0,-\v) {$\cods{ca!y}$};
      \node[event] (ca?) at (\h,-\v) {$\cods{ca?y}$};
      \node[event] (cb!) at (\h+\hm,-\v) {$\cods{cb!y}$};
      \node[event] (cb?) at (2*\h+\hm,-\v) {$\cods{cb?y}$};

      \draw (ab!) -- (ab?);
      \draw (ac!) -- (ac?);
      \draw (ba!) -- (ba?);
      \draw (bc!) -- (bc?);
      \draw (ca!) -- (ca?);
      \draw (cb!) -- (cb?);

      \pomsetbox{(ab!) (cb?)}
    \end{tikzpicture}
    \\
    \noalign{\medskip}
    \begin{tikzpicture}[->, >=stealth, baseline = (anchor), scale = .75]
      \node (anchor) at (0,0) {};

      \node[event] (ab!) at (0,\v) {$\cods{ab!y}$};
      \node[event] (ab?) at (\h,\v) {$\cods{ab?y}$};
      \node[event] (ac!) at (\h+\hm,\v) {$\cods{ac!y}$};
      \node[event] (ac?) at (2*\h+\hm,\v) {$\cods{ac?y}$};
      \node[event] (ba!) at (0,0) {$\cods{ba!y}$};
      \node[event] (ba?) at (\h,0) {$\cods{ba?y}$};
      \node[event] (bc!) at (\h+\hm,0) {$\cods{bc!y}$};
      \node[event] (bc?) at (2*\h+\hm,0) {$\cods{bc?y}$};
      \node[event] (ca!) at (0,-\v) {$\cods{ca!n}$};
      \node[event] (ca?) at (\h,-\v) {$\cods{ca?n}$};
      \node[event] (cb!) at (\h+\hm,-\v) {$\cods{cb!n}$};
      \node[event] (cb?) at (2*\h+\hm,-\v) {$\cods{cb?n}$};

      \draw (ab!) -- (ab?);
      \draw (ac!) -- (ac?);
      \draw (ba!) -- (ba?);
      \draw (bc!) -- (bc?);
      \draw (ca!) -- (ca?);
      \draw (cb!) -- (cb?);

      \pomsetbox{(ab!) (cb?)}
    \end{tikzpicture}
    &
    \begin{tikzpicture}[->, >=stealth, baseline = (anchor), scale = .75]
      \node (anchor) at (0,0) {};

      \node[event] (ab!) at (0,\v) {$\cods{ab!n}$};
      \node[event] (ab?) at (\h,\v) {$\cods{ab?n}$};
      \node[event] (ac!) at (\h+\hm,\v) {$\cods{ac!n}$};
      \node[event] (ac?) at (2*\h+\hm,\v) {$\cods{ac?n}$};
      \node[event] (ba!) at (0,0) {$\cods{ba!y}$};
      \node[event] (ba?) at (\h,0) {$\cods{ba?y}$};
      \node[event] (bc!) at (\h+\hm,0) {$\cods{bc!y}$};
      \node[event] (bc?) at (2*\h+\hm,0) {$\cods{bc?y}$};
      \node[event] (ca!) at (0,-\v) {$\cods{ca!n}$};
      \node[event] (ca?) at (\h,-\v) {$\cods{ca?n}$};
      \node[event] (cb!) at (\h+\hm,-\v) {$\cods{cb!n}$};
      \node[event] (cb?) at (2*\h+\hm,-\v) {$\cods{cb?n}$};

      \draw (ab!) -- (ab?);
      \draw (ac!) -- (ac?);
      \draw (ba!) -- (ba?);
      \draw (bc!) -- (bc?);
      \draw (ca!) -- (ca?);
      \draw (cb!) -- (cb?);

      \pomsetbox{(ab!) (cb?)}
    \end{tikzpicture}
    \\
    \noalign{\medskip}
    \begin{tikzpicture}[->, >=stealth, baseline = (anchor), scale = .75]
      \node (anchor) at (0,0) {};

      \node[event] (ab!) at (0,\v) {$\cods{ab!y}$};
      \node[event] (ab?) at (\h,\v) {$\cods{ab?y}$};
      \node[event] (ac!) at (\h+\hm,\v) {$\cods{ac!y}$};
      \node[event] (ac?) at (2*\h+\hm,\v) {$\cods{ac?y}$};
      \node[event] (ba!) at (0,0) {$\cods{ba!n}$};
      \node[event] (ba?) at (\h,0) {$\cods{ba?n}$};
      \node[event] (bc!) at (\h+\hm,0) {$\cods{bc!n}$};
      \node[event] (bc?) at (2*\h+\hm,0) {$\cods{bc?n}$};
      \node[event] (ca!) at (0,-\v) {$\cods{ca!y}$};
      \node[event] (ca?) at (\h,-\v) {$\cods{ca?y}$};
      \node[event] (cb!) at (\h+\hm,-\v) {$\cods{cb!y}$};
      \node[event] (cb?) at (2*\h+\hm,-\v) {$\cods{cb?y}$};

      \draw (ab!) -- (ab?);
      \draw (ac!) -- (ac?);
      \draw (ba!) -- (ba?);
      \draw (bc!) -- (bc?);
      \draw (ca!) -- (ca?);
      \draw (cb!) -- (cb?);

      \pomsetbox{(ab!) (cb?)}
    \end{tikzpicture}
    &
    \begin{tikzpicture}[->, >=stealth, baseline = (anchor), scale = .75]
      \node (anchor) at (0,0) {};

      \node[event] (ab!) at (0,\v) {$\cods{ab!n}$};
      \node[event] (ab?) at (\h,\v) {$\cods{ab?n}$};
      \node[event] (ac!) at (\h+\hm,\v) {$\cods{ac!n}$};
      \node[event] (ac?) at (2*\h+\hm,\v) {$\cods{ac?n}$};
      \node[event] (ba!) at (0,0) {$\cods{ba!n}$};
      \node[event] (ba?) at (\h,0) {$\cods{ba?n}$};
      \node[event] (bc!) at (\h+\hm,0) {$\cods{bc!n}$};
      \node[event] (bc?) at (2*\h+\hm,0) {$\cods{bc?n}$};
      \node[event] (ca!) at (0,-\v) {$\cods{ca!y}$};
      \node[event] (ca?) at (\h,-\v) {$\cods{ca?y}$};
      \node[event] (cb!) at (\h+\hm,-\v) {$\cods{cb!y}$};
      \node[event] (cb?) at (2*\h+\hm,-\v) {$\cods{cb?y}$};

      \draw (ab!) -- (ab?);
      \draw (ac!) -- (ac?);
      \draw (ba!) -- (ba?);
      \draw (bc!) -- (bc?);
      \draw (ca!) -- (ca?);
      \draw (cb!) -- (cb?);

      \pomsetbox{(ab!) (cb?)}
    \end{tikzpicture}
    \\
    \noalign{\medskip}
    \begin{tikzpicture}[->, >=stealth, baseline = (anchor), scale = .75]
      \node (anchor) at (0,0) {};

      \node[event] (ab!) at (0,\v) {$\cods{ab!y}$};
      \node[event] (ab?) at (\h,\v) {$\cods{ab?y}$};
      \node[event] (ac!) at (\h+\hm,\v) {$\cods{ac!y}$};
      \node[event] (ac?) at (2*\h+\hm,\v) {$\cods{ac?y}$};
      \node[event] (ba!) at (0,0) {$\cods{ba!n}$};
      \node[event] (ba?) at (\h,0) {$\cods{ba?n}$};
      \node[event] (bc!) at (\h+\hm,0) {$\cods{bc!n}$};
      \node[event] (bc?) at (2*\h+\hm,0) {$\cods{bc?n}$};
      \node[event] (ca!) at (0,-\v) {$\cods{ca!n}$};
      \node[event] (ca?) at (\h,-\v) {$\cods{ca?n}$};
      \node[event] (cb!) at (\h+\hm,-\v) {$\cods{cb!n}$};
      \node[event] (cb?) at (2*\h+\hm,-\v) {$\cods{cb?n}$};

      \draw (ab!) -- (ab?);
      \draw (ac!) -- (ac?);
      \draw (ba!) -- (ba?);
      \draw (bc!) -- (bc?);
      \draw (ca!) -- (ca?);
      \draw (cb!) -- (cb?);

      \pomsetbox{(ab!) (cb?)}
    \end{tikzpicture}
    &
    \begin{tikzpicture}[->, >=stealth, baseline = (anchor), scale = .75]
      \node (anchor) at (0,0) {};

      \node[event] (ab!) at (0,\v) {$\cods{ab!n}$};
      \node[event] (ab?) at (\h,\v) {$\cods{ab?n}$};
      \node[event] (ac!) at (\h+\hm,\v) {$\cods{ac!n}$};
      \node[event] (ac?) at (2*\h+\hm,\v) {$\cods{ac?n}$};
      \node[event] (ba!) at (0,0) {$\cods{ba!n}$};
      \node[event] (ba?) at (\h,0) {$\cods{ba?n}$};
      \node[event] (bc!) at (\h+\hm,0) {$\cods{bc!n}$};
      \node[event] (bc?) at (2*\h+\hm,0) {$\cods{bc?n}$};
      \node[event] (ca!) at (0,-\v) {$\cods{ca!n}$};
      \node[event] (ca?) at (\h,-\v) {$\cods{ca?n}$};
      \node[event] (cb!) at (\h+\hm,-\v) {$\cods{cb!n}$};
      \node[event] (cb?) at (2*\h+\hm,-\v) {$\cods{cb?n}$};

      \draw (ab!) -- (ab?);
      \draw (ac!) -- (ac?);
      \draw (ba!) -- (ba?);
      \draw (bc!) -- (bc?);
      \draw (ca!) -- (ca?);
      \draw (cb!) -- (cb?);

      \pomsetbox{(ab!) (cb?)}
    \end{tikzpicture}
  \end{Bmatrix}
  $
  \hspace{\stretch{1}}
  \begingroup
  \def\h{1.6} 
  \def\hm{1.2} 
  \def\v{1} 
  \def\vb{1.8} 
  \begin{tikzpicture}[scale = 0.75, ->, >=stealth, baseline = (anchor)]
    \node (anchor) at (0, 1.5*\v + \vb) {};

    \node[event] (ca!f) at (0, 0) {$\cods{ca!n}$};
    \node[event] (ca?f) at (\h, 0) {$\cods{ca?n}$};
    \node[event] (cb!f) at (\h + \hm, 0) {$\cods{cb!n}$};
    \node[event] (cb?f) at (2*\h + \hm, 0) {$\cods{cb?n}$};
    \node[event] (ca!t) at (0, \v) {$\cods{ca!y}$};
    \node[event] (ca?t) at (\h, \v) {$\cods{ca?y}$};
    \node[event] (cb!t) at (\h + \hm, \v) {$\cods{cb!y}$};
    \node[event] (cb?t) at (2*\h + \hm, \v) {$\cods{cb?y}$};

    \draw (ca!t) -- (ca?t);
    \draw (cb!t) -- (cb?t);
    \draw (ca!f) -- (ca?f);
    \draw (cb!f) -- (cb?f);

    \pomsetboxb{cf}{(ca!f) (cb?f)}
    \pomsetboxb{ct}{(ca!t) (cb?t)}

    \choiceboxb[2]{c}{(cf) (ct)}

    \node[event] (ba!f) at (0, \v + \vb) {$\cods{ba!n}$};
    \node[event] (ba?f) at (\h, \v + \vb) {$\cods{ba?n}$};
    \node[event] (bc!f) at (\h + \hm, \v + \vb) {$\cods{bc!n}$};
    \node[event] (bc?f) at (2*\h + \hm, \v + \vb) {$\cods{bc?n}$};
    \node[event] (ba!t) at (0, 2*\v + \vb) {$\cods{ba!y}$};
    \node[event] (ba?t) at (\h, 2*\v + \vb) {$\cods{ba?y}$};
    \node[event] (bc!t) at (\h + \hm, 2*\v + \vb) {$\cods{bc!y}$};
    \node[event] (bc?t) at (2*\h + \hm, 2*\v + \vb) {$\cods{bc?y}$};

    \draw (ba!t) -- (ba?t);
    \draw (bc!t) -- (bc?t);
    \draw (ba!f) -- (ba?f);
    \draw (bc!f) -- (bc?f);

    \pomsetboxb{bf}{(ba!f) (bc?f)}
    \pomsetboxb{bt}{(ba!t) (bc?t)}

    \choiceboxb[2]{b}{(bf) (bt)}

    \node[event] (ab!f) at (0, 2*\v + 2*\vb) {$\cods{ab!n}$};
    \node[event] (ab?f) at (\h, 2*\v + 2*\vb) {$\cods{ab?n}$};
    \node[event] (ac!f) at (\h + \hm, 2*\v + 2*\vb) {$\cods{ac!n}$};
    \node[event] (ac?f) at (2*\h + \hm, 2*\v + 2*\vb) {$\cods{ac?n}$};
    \node[event] (ab!t) at (0, 3*\v + 2*\vb) {$\cods{ab!y}$};
    \node[event] (ab?t) at (\h, 3*\v + 2*\vb) {$\cods{ab?y}$};
    \node[event] (ac!t) at (\h + \hm, 3*\v + 2*\vb) {$\cods{ac!y}$};
    \node[event] (ac?t) at (2*\h + \hm, 3*\v + 2*\vb) {$\cods{ac?y}$};

    \draw (ab!t) -- (ab?t);
    \draw (ac!t) -- (ac?t);
    \draw (ab!f) -- (ab?f);
    \draw (ac!f) -- (ac?f);

    \pomsetboxb{af}{(ab!f) (ac?f)}
    \pomsetboxb{at}{(ab!t) (ac?t)}

    \choiceboxb[2]{a}{(af) (at)}

    \pomsetbox{(a) (b) (c)}
  \end{tikzpicture}
  \endgroup
  \hspace{\stretch{1}}

  \caption{A set of pomsets (left) and a \npom{} (right) representing a three-participant distributed vote.}
  \label{fig:pomset-vs-branching-pomset}
\end{figure}

This paper proposes an extension to pomsets, named \emph{\npoms{}}, with a branching structure that can compactly represent choices.
A \npom{} initially contains all branches of choices, and discards non-chosen branches when firing events that require resolving a choice.
The right side of \Cref{fig:pomset-vs-branching-pomset} depicts an example of a \npom{} for our DV protocol: where we would traditionally need $2 \times 2 \times 2$ pomsets (with six pairs of events each), we can represent the same behaviour as a single \npom{} with $2 + 2 + 2$ choices (with four pairs of events each).
Adding an additional participant would double the number of pomsets in the set of pomsets, while it would add a single choice to the \npom{}.

To aid in the understanding of \npoms{} and their semantics, we provide a prototype tool to visualise them, available at \url{https://arca.di.uminho.pt/b-pomset/}.
The tool provides a web interface where one can submit an input choreography, which is then visualised as a \npom{} and can be simulated.
The examples and figures in the paper are already available as preset inputs.
We note that the pomset simulation in our prototype currently does not support loops, for reasons which will become apparent later in the paper; however, all other operators are supported and we are most interested in (combinations of) choice and parallel composition.

\paragraph{Contribution}
This paper provides three core contributions: (1) an extension of pomsets with a branching structure, named \npoms, (2) an encoding from a choreographic language into \npoms, and (3) a formal proof that the operational semantics of a choreography and of its encoded \npom are equivalent, i.e., bisimilar.

\paragraph{Structure of the paper}
\Cref{sec:choreography} presents the syntax of our choreography language and its operational semantics.
\Cref{sec:pomsets} formalises \npoms{} and their semantics.
\Cref{sec:chor-to-pom} formalises how to obtain a \npom{} from a choreography and shows that a choreography and its derived \npom{} are behaviourally equivalent.
Finally, \Cref{sec:conclusion} presents our conclusions and a brief discussion about future work and related work.


\section{Choreographies}
\label{sec:choreography}

In this section, we formally define the syntax and semantics of our choreographic language, examples of which have been shown in the previous section.

Let $\A$ be the set of all participants $\pt{a}, \pt{b}, \ldots$.
Let $\X$ be the set of all message types $\ms{x}, \ms{y}, \ldots$.
Let $\L = \bigcup_{\pt{a},\pt{b} \in \A, \ms{x} \in \X} \set{\cod{ab!x}, \cod{ab?x}}$ be the set of actions $\set{\cod{ab!x}, \cod{ab?x}}$ for all participants $\pt{a} \neq \pt{b}$ and message types $\ms{x}$.
For all actions the \emph{subject} of the action is its active participant: the subject of a send action $\cod{ab!x}$ is $\pt{a}$, written $\subj{\cod{ab!x}} = \pt{a}$, and the subject of a receive action $\cod{ab?x}$ is $\pt{b}$.

The syntax is formally defined in \Cref{fig:chor-syntax}.
Its components are standard:
`$\nil$' is the empty choreography;
`$\cod{a->b:x}$' is the asynchronous communication from $\pt{a}$ to $\pt{b}$ of a message of type $\ms{x}$;
the boxed term `$\cod{ab?x}$' represents a pending receive on $\pt{b}$ from $\pt{a}$ of a message of type $\ms{x}$ (it is boxed in \Cref{fig:chor-syntax} to indicate that it is only used internally to formalise behaviour but the box is not part of the syntax);
`$c_1 \seq c_2$', `$c_1 \alt c_2$' and `$c_1 \parallel c_2$' are respectively the weak sequential composition, nondeterministic choice and parallel composition of choreographies $c_1$ and $c_2$;
finally, `$c^*$' is the finite repetition (or, more informally, loop) of choreography $c$.
The semantics for choice, parallel composition and loop are standard.
We note that our sequential composition is weak.
More traditionally, when sequencing $c_1$ and $c_2$, the choreography $c_1$ must fully terminate before proceeding to $c_2$.
With weak sequential composition, however, actions in $c_2$ can already be executed as long as they do not interfere with $c_1$.
For example, in $\cod{a->b:x} \seq \cod{c->d:x}$ we can execute the action $\cod{cd!x}$ as it does not affect the participants of $\cod{a->b:x}$: there is no dependency and thus no need to wait for $\cod{a->b:x}$ to go first.
However, in $\cod{a->b:x} \seq \cod{a->c:x}$ the action $\cod{ac!x}$ cannot be executed first as its subject ($\pt{a}$) must first execute $\cod{ab!x}$.
This is the common interpretation of sequential composition in the context of message sequence charts~\cite{DBLP:conf/fbt/KatoenL98}, multiparty session types~\cite{DBLP:conf/popl/HondaYC08} and choreographic programming~\cite{DBLP:conf/popl/CarboneM13}.

\begin{figure}[t]
  \[
    c ::= \nil ~|~ \cod{a->b:x} ~|~ \smash{\boxed{\cod{ab?x}}} ~|~ c \seq c ~|~ c \alt c ~|~ c \parallel c ~|~ c^*
  \]

  \caption{Syntax of choreographies, where $\pt{a}$ and $\pt{b}$ are participants ($\pt{a} \neq \pt{b}$) and $\ms{x}$ is a message type.}
  \label{fig:chor-syntax}
\end{figure}

The reduction rules of our choreographic language are formally defined in \Cref{fig:chor-semantics-reduction} and its termination rules in \Cref{fig:chor-semantics-termination}.
To formalise the reduction of weak sequential composition, we follow Rensink and Wehrheim~\cite{DBLP:journals/acta/RensinkW01}, who define a notion of \emph{partial termination}.

\begin{figure}[t]
  \begin{subfigure}{\textwidth}
    \begin{gather*}
    \dfrac{}{\cod{a->b:x} \tr[]{\cods{ab!x}} \cod{ab?x}}
    \quad
    \dfrac{}{\cod{ab?x} \tr[]{\cods{ab?x}} \nil}
    \quad
    \dfrac{c_1 \tr{\ell} c_1'}{c_1 \seq c_2 \tr[]{\ell} c_1' \seq c_2}
    \quad
    \dfrac{c_1 \trpt{\ell} c_1' \quad c_2 \tr{\ell} c_2'}{c_1 \seq c_2 \tr[]{\ell} c_1' \seq c_2'}
    \\[2mm]
    \dfrac{c_1 \tr[]{\ell} c_1'}{c_1 \parallel c_2 \tr[]{\ell} c_1' \parallel c_2}
    \quad
    \dfrac{c_2 \tr{\ell} c_2'}{c_1 \parallel c_2 \tr[]{\ell} c_1 \parallel c_2'}
    \quad
    \dfrac{c_1 \tr{\ell} c_1'}{c_1 \alt c_2 \tr[]{\ell} c_1'}
    \quad
    \dfrac{c_2 \tr{\ell} c_2'}{c_1 \alt c_2 \tr[]{\ell} c_2'}
    \quad
    \dfrac{c \tr{\ell} c'}{c^* \tr[]{\ell} c' \seq c^*}
    \end{gather*}

    \caption{Reduction rules.}
    \label{fig:chor-semantics-reduction}
  \end{subfigure}

  \begin{subfigure}{\textwidth}
    \begin{gather*}
    \dfrac{}{\nil \isFinal}
    \quad
    \dfrac{}{c^* \isFinal}
    \quad
    \dfrac{c_1 \isFinal \quad c_2 \isFinal \quad \dagger \in \set{{\seq}, {\parallel}}}{c_1 \dagger c_2 \isFinal}
    \quad
    \dfrac{c_i \isFinal \quad i \in \set{1, 2}}{c_1 \alt c_2 \isFinal}
    \end{gather*}

    \caption{Termination rules.}
    \label{fig:chor-semantics-termination}
  \end{subfigure}

  \begin{subfigure}{\textwidth}
    \begin{gather*}
    \dfrac{}{\nil \trpt[]{\ell} \nil}
    \quad
    \dfrac{c \trpt[]{\ell} c}{c^* \trpt[]{\ell} c^*}
    \quad
    \dfrac{c \not \trpt[]{\ell} c}{c^* \trpt[]{\ell} \nil}
    \quad
    \dfrac{c_1 \trpt[]{\ell} c_1' \quad c_2 \trpt[]{\ell} c_2' \quad \dagger \in \set{{\seq}, {\parallel}, {\alt}}}{c_1 \dagger c_2 \trpt[]{\ell} c_1' \dagger c_2'}
    \\[2mm]
    \dfrac{c_1 \trpt[]{\ell} c_1' \quad c_2 \not \trpt[]{\ell}}{c_1 \alt c_2 \trpt[]{\ell} c_1'}
    \quad
    \dfrac{c_1 \not \trpt[]{\ell} \quad c_2 \trpt[]{\ell} c_2'}{c_1 \alt c_2 \trpt[]{\ell} c_2'}
    \quad
    \dfrac{\subj{\ell} \notin \set{\pt{a}, \pt{b}}}{\cod{a->b:x} \trpt[]{\ell} \cod{a->b:x}}
    \quad
    \dfrac{\subj{\ell} \neq \pt{b}}{\cod{ab?x} \trpt[]{\ell} \cod{ab?x}}
    \end{gather*}

    \caption{Partial termination rules.}
    \label{fig:chor-semantics-partial-termination}
  \end{subfigure}

  \caption{Operational semantics of choreographies.}
  \label{fig:chor-semantics}
\end{figure}

\paragraph{Partial termination}

In a weak sequential composition $c_1 \seq c_2$, an action $\ell$ in $c_2$ can be executed if $c_1$ can \emph{partially terminate} for the subject of $\ell$.
Conceptually, a choreography $c_1$ can partially terminate for the subject of $\ell$ by discarding all branches of its behaviour which would conflict with it, i.e., in which the subject of $\ell$ occurs.
This is written $c_1 \trpt{\ell} c_1'$, where $c_1'$ is the remainder of $c_1$ after discarding all branches involving the subject of $\ell$.
For example, if $c_1 = \cod{a->b:x} \alt \cod{a->c:x}$ then $c_1 \trpt{\cod{cd!x}} \cod{a->b:x}$, as this branch does not contain $\pt{c}$.
An exception is when the subject of $\ell$ occurs in \emph{every} branch of $c_1$, in which case $c_1$ cannot partially terminate for the subject of $\ell$, i.e., $c_1 \not \trpt{\ell}$.
In the above example, $c_1 \not \trpt{\cod{ad!x}}$.

The rules for partial termination are deterministic and only discard the absolutely necessary.
In the example above, $c_1 \trpt{\cod{da!x}} c_1$ since the subject $\pt{d}$ does not occur in either branch: dropping one of the branches would be unnecessary and is thus not allowed.
The rules for partial termination are defined in \Cref{fig:chor-semantics-partial-termination}.
We highlight the rules for operators:
\begin{itemize}
  \item Sequential composition $c_1 \seq c_2$ and parallel composition $c_1 \parallel c_2$ can partially terminate if both $c_1$ and $c_2$ can.


  \item A choice $c_1 \alt c_2$ can partially terminate if at least one of its branches can.
  If both branches can partially terminate then both are kept, otherwise only the partially terminated one is kept.

  \item Following Rensink and Wehrheim, a loop $c^*$ can partially terminate if its body ($c$) can partially terminate without discarding any branches, i.e., if $c \trpt{\ell} c$. In that case also $c^* \trpt{\ell} c^*$.
  Otherwise we allow $c^*$ to be skipped entirely, represented as partial termination to $\nil$, i.e., $c^* \trpt{\ell} \nil$.
  This can happen either if $c$ can partially terminate to $c'$ but $c' \neq c$, or if $c$ cannot partially terminate at all.
  We use $c \not \trpt{\ell} c$ as a shorthand to cover both these cases.
  Skipping a loop is necessary, for example, in a modified master-workers protocol where the master can send an arbitrary number of tasks to the workers, followed by an $\ms{end}$ message to indicate termination.
  With one worker, this protocol is expressed as $(\cod{m->w_1:t} \seq \cod{w_1->m:d})^* \seq \cod{m->w_1:end}$.
  In this choreography, the loop has to eventually partially terminate to $\nil$ to allow for the action $\cod{mw_1!end}$.
\end{itemize}




\pagebreak

\begin{example}
  Let $c_1 = (\cod{a->b:x} \alt \cod{a->c:x}) \seq (\cod{d->b:x} \alt \cod{d->e:x})$.
  Let $c_2 = (\cod{a->b:x} \alt \cod{c->b:x})^* \parallel (\cod{c->a:x} \alt \cod{c->b:x})$.

  \begin{itemize}
    \item $c_1 \trpt{\cod{be!x}} \cod{a->c:x} \seq \cod{d->e:x}$.
          The subject $\pt{b}$ of $\cod{be!x}$ occurs in one branch of each of both choices. While the recipient $\pt{e}$ also occurs in the second branch of the second choice, since it is not the actual subject it does not create a conflict.

    \item $c_1 \not \trpt{\cod{ab!x}}$.
          While the second choice can partially terminate without reducing, the first choice contains the subject $\pt{a}$ of $\cod{ab!x}$ in both of its branches. Since one of the choices cannot reduce, neither can their sequential composition.
    
    \item $c_2 \trpt{\cod{ad!x}} \nil \parallel \cod{c->b:x}$.
          The subject $\pt{a}$ of $\cod{ad!x}$ only occurs in one branch of the loop body, but the loop can only reduce to $\nil$.
          On the right hand side of the parallel composition, $\pt{a}$ occurs only in the first branch.

    \item $c_2 \not \trpt{\cod{cd!x}}$.
          While the loop can again reduce to $\nil$, the subject $\pt{c}$ of $\cod{cd!x}$ occurs in both branches of the right hand side of the parallel composition. Since its right hand side cannot partially terminate, neither can it as a whole.

  \end{itemize}
\end{example}

As already discovered by Rensink and Wehrheim~\cite{DBLP:journals/acta/RensinkW01}, an unwanted consequence of these rules for partial termination is that unfolding iterations of loops no longer preserves behaviour.
We would like $c^*$ and $(c \seq c^*) \alt \nil$ to behave the same, but this is not the case.
For example, if $c = \cod{a->b:x} \alt \cod{c->d:x}$, then $c^* \trpt{\cod{ab!x}} \nil$ but $(c \seq c^*) \alt \nil \trpt{\cod{ab!x}} (\cod{c->d:x} \seq \nil) \alt \nil$.
Then $c^* \seq c \tr{\cod{ab!x}} \cod{ab?x}$ by skipping the loop;
however, $((c \seq c^*) \alt \nil) \seq c$ has no way to match this as it can skip the loop but it can only reduce the already unfolded iteration $c$ to $\cod{c->d:x}$ --- it cannot discard it entirely.
We borrow the solution that Rensink and Wehrheim offer, which is the concept \emph{dependent guardedness}.

\paragraph{Dependent guardedness}

A loop $c^*$ is \emph{dependently guarded} if, for all actions $\ell$, the loop body $c$ can only partially terminate for the subject of $\ell$ if it does not occur in $c$ at all.
In other words: any participant that occurs in some branch of $c$ must also occur in every other branch of $c$.
It then follows that $c$ can either partially terminate for the subject of $\ell$ without having to reduce, or it cannot partially terminate at all.
Formally: if $c \trpt{\ell} c'$ then $c' = c$.
A choreography $\hat{c}$ is then dependently guarded if all of its loops are.

As a consequence, we avoid the problem above: if $c^* \trpt{\ell} \nil$ then $c \not\trpt{\ell}$ and $(c \seq c^*) \alt \nil$ is also forced to reduce to the second branch of the choice, which is $\nil$.
More precisely, let $c^*$ be some dependently guarded expression.
If $c \trc{\ell} c'$ for some $\ell, c'$, then $c' = c$.
It follows that $c^* \trc{\ell} c^*$ and $(c \seq c^*) \alt \nil \trc{\ell} (c \seq c^*) \alt \nil$.
Similarly, if $c \not \trc{\ell}$ then $c^* \trc{\ell} \nil$ and $(c \seq c^*) \alt \nil \trc{\ell} \nil$.

\begin{example}
Let $c_1 = \cod{a->b:x} \alt \cod{a->c:x}$.
Let $c_2 = \cod{a->b:x} \alt \cod{b->a:x}$.


\begin{itemize}
  \item $c_1^*$ is not dependently guarded as $c_1 \trc{\cod{cd!x}} \cod{a->b:x} \neq c_1$. However, $c_1$ itself \emph{is} dependently guarded as it does not contain any loop.

  \item $c_2^*$ is dependently guarded since both $\pt{a}$ and $\pt{b}$ occur in both branches of $c_2$. However, $(c_2^*)^*$ is \emph{not} dependently guarded, since $c_2^* \trc{\cod{ab!x}} \nil$.
\end{itemize}
\end{example}

\section{\Npoms}
\label{sec:pomsets}

In this section, we formally define the syntax and semantics of \npoms{}.
Additionally, we define a pomset interpretation of expressions in our choreographic language and we show this interpretation to be faithful by proving that it is bisimilar to the original choreography.

A partially ordered multiset~\cite{DBLP:journals/ijpp/Pratt86}, or pomset for short, consists of a set of nodes $E$ (events), a labelling function $\lambda$ to map events to some set of labels (e.g., send and receive actions), and a partial order ${\leq}$ to define dependencies between pairs of events (e.g., an event, or rather its corresponding action, can only fire if all events preceding it in the partial order have already fired).
Its behaviour is the set of all sequences of the labels of its events that abide by ${\leq}$.

For example, for the pomset in \Cref{fig:automaton-vs-pomset}, $E = \set{e_1, \ldots, e_8}$, $\lambda = \set{e_1 \mapsto \cod{mw_1!t}, e_2 \mapsto \cod{mw_1?t}, e_3 \mapsto \cod{w_1m!d}, e_4 \mapsto \cod{w_1m?d}, e_5 \mapsto \cod{mw_2!t}, e_6 \mapsto \cod{mw_2?t}, e_7 \mapsto \cod{w_2m!d}, e_8 \mapsto \cod{w_2m?d}}$, and ${\leq} = \set{(e_i, e_j) \mid (i, j \in [1,4] \lor i, j \in [5,8]) \land i \leq j}$.
Its behaviour consists of all interleavings of $\cod{mw_1!t} \; \cod{mw_1?t} \; \cod{w_1m!d} \; \cod{w_1m?d}$ and $\cod{mw_2!t} \; \cod{mw_2?t} \; \cod{w_2m!d} \; \cod{w_2m?d}$.



As illustrated in \Cref{fig:pomset-vs-branching-pomset}, however, traditional pomsets suffer from the same problem when representing choices that automata suffer from when representing concurrency: there is no explicit representation of choices in pomsets, and they are represented only implicitly as a set of possible pomsets.
We tackle this by extending pomsets with an explicit representation of choices: a branching structure on events.

\paragraph{Branching structure}
The general idea of a \npom{} is that all possible events are initially part of the pomset, but some are defined as being part of a choice.
To fire these, all relevant choices must first be resolved by replacing the choice with one of its branches, thereby discarding the other branch.
This same idea governs the operational semantics of choreographies (\Cref{fig:chor-semantics}): both branches of a choice are initially part of the choreography but, to proceed in one of them, the other must be dropped.

The branching structure does not interrupt the partial order and all events still participate in it, as shown in \Cref{fig:branching-pomset}, where arrows flow both into and out of the branches of the choice.
As such, a choice can also be resolved to fire an event which is only preceded by one of the branches, reminiscent of the partial termination of choices (\Cref{fig:chor-semantics-partial-termination}).
For example, in \Cref{fig:branching-pomset} the upper branch ($\cod{b->c:x}$) can be discarded to fire the event labelled $\cod{cd!x}$, as it is not dependent on the lower branch.
As shown in \Cref{fig:branching-pomset-nesting}, nested choices are supported as well.

\begin{figure}[t]
  \centering

  \begingroup
  \def\h{1.6}
  \def\hb{1.8}
  \def\v{.4}
  \begin{tikzpicture}[scale = 1, ->, >=stealth, baseline = (anchor)]
    \node (anchor) at (0,0) {};

    \node[event] (ab!) at (0, 0) {$\cod{ab!x}$};
    \node[event] (ab?) at (\h, 0) {$\cod{ab?x}$};

    \draw (ab!) -- (ab?);

    \node[event] (bc!) at (\h + \hb, \v) {$\cod{bc!x}$};
    \node[event] (bc?) at (2*\h + \hb, \v) {$\cod{bc?x}$};
    \node[event] (bd!) at (\h + \hb, -\v) {$\cod{bd!x}$};
    \node[event] (bd?) at (2*\h + \hb, -\v) {$\cod{bd?x}$};

    \draw (bc!) -- (bc?);
    \draw (bd!) -- (bd?);

    \pomsetboxb{bc}{(bc!) (bc?)}
    \pomsetboxb{bd}{(bd!) (bd?)}

    \choiceboxb[2]{b}{(bc) (bd)}

    \node[event] (cd!) at (2*\h + 2*\hb, 0) {$\cod{cd!x}$};
    \node[event] (cd?) at (3*\h + 2*\hb, 0) {$\cod{cd?x}$};

    \draw (cd!) -- (cd?);

    \draw[shorten < = 2pt, shorten > = -1pt] (ab?) |- (bc!);
    \draw[shorten < = 2pt, shorten > = -1pt] (ab?) |- (bd!);
    \draw[shorten > = 2pt] (bc?) -| (cd!);
    \draw[shorten > = 2pt] (bd?) -| (cd?);

    \pomsetbox{(ab!) (cd?) (b)}
  \end{tikzpicture}
  \endgroup

  \caption{A \npom{} representing the choreography $\cod{a->b:x} \seq (\cod{b->c:x} \alt \cod{b->d:x}) \seq \cod{c->d:x}$.}
  \label{fig:branching-pomset}
\end{figure}

\begin{figure}[t]
  \centering

  \begingroup
  \def\h{1.6}
  \def\hb{.2}
  \def\v{1.2}
  \def\vb{.4}
  \begin{tikzpicture}[scale = 1, ->, >=stealth, baseline = (anchor)]
    \node (anchor) at (0,0) {};

    \node[event] (ab!) at (0, \v) {$\cod{ab!x}$};
    \node[event] (ab?) at (\h, \v) {$\cod{ab?x}$};
    \node[event] (ac!) at (0, -\v) {$\cod{ac!x}$};
    \node[event] (ac?) at (\h, -\v) {$\cod{ac?x}$};

    \draw (ab!) -- (ab?);
    \draw (ac!) -- (ac?);

    \node[event] (ba!) at (2*\h + \hb, \v + \vb) {$\cod{ba!x}$};
    \node[event] (ba?) at (3*\h + \hb, \v + \vb) {$\cod{ba?x}$};
    \node[event] (bd!) at (2*\h + \hb, \v + -\vb) {$\cod{bd!x}$};
    \node[event] (bd?) at (3*\h + \hb, \v + -\vb) {$\cod{bd?x}$};

    \draw (ba!) -- (ba?);
    \draw (bd!) -- (bd?);

    \pomsetboxb{ba}{(ba!) (ba?)}
    \pomsetboxb{bd}{(bd!) (bd?)}

    \choiceboxb[2]{b}{(ba) (bd)}

    \draw[shorten < = 2pt, shorten > = -1pt] (ab?) |- (ba!);
    \draw[shorten < = 2pt, shorten > = -1pt] (ab?) |- (bd!);

    \node[event] (ca!) at (2*\h + \hb, -\v + \vb) {$\cod{ca!x}$};
    \node[event] (ca?) at (3*\h + \hb, -\v + \vb) {$\cod{ca?x}$};
    \node[event] (cd!) at (2*\h + \hb, -\v + -\vb) {$\cod{cd!x}$};
    \node[event] (cd?) at (3*\h + \hb, -\v + -\vb) {$\cod{cd?x}$};

    \draw (ca!) -- (ca?);
    \draw (cd!) -- (cd?);

    \pomsetboxb{ca}{(ca!) (ca?)}
    \pomsetboxb{cd}{(cd!) (cd?)}

    \choiceboxb[2]{c}{(ca) (cd)}

    \draw[shorten < = 2pt, shorten > = -1pt] (ac?) |- (ca!);
    \draw[shorten < = 2pt, shorten > = -1pt] (ac?) |- (cd!);

    \pomsetboxb[3]{ab}{(ab!) (b)}
    \pomsetboxb[3]{ac}{(ac!) (c)}
    \choiceboxb[4]{a}{(ab) (ac)}

    \node[event] (da!) at (4*\h + 3*\hb, 0) {$\cod{da!x}$};
    \node[event] (da?) at (5*\h + 3*\hb, 0) {$\cod{da?x}$};

    \draw (da!) -- (da?);
    \draw[shorten > = 2pt] (bd?) -| (da!);
    \draw[shorten > = 2pt] (ba?) -| (da?);
    \draw[shorten > = 2pt] (cd?) -| (da!);
    \draw[shorten > = 2pt] (ca?) -| (da?);

    \pomsetbox{(a) (da?)}
  \end{tikzpicture}
  \endgroup

  \caption{A \npom{} representing the choreography $((\cod{a->b:x} \seq (\cod{b->a:x} \alt \cod{b->d:x})) \alt \allowbreak (\cod{a->c:x} \seq (\cod{c->a:x} \alt \cod{c->d:x}))) \seq \cod{d->a:x}$.}
  \label{fig:branching-pomset-nesting}
\end{figure}

Formally, the branching structure is defined below as a tree with root node $\N$, whose children are either a single event $e$ or a choice node $\C$ with children (branches) $\N_1, \N_2$.
All leaves are events.
\begin{align*}
  \N &::= \set{\C_1, \ldots, \C_n}
  \\
  \C &::= e ~|~ \set{\N_1, \N_2}
\end{align*}

For example, for the pomset in \Cref{fig:branching-pomset}, if $E = \set{e_1, \ldots, e_8}$ and $\lambda = \set{e_1 \mapsto \cod{ab!x}, e_2 \mapsto \cod{ab?x}, e_3 \mapsto \cod{cd!x}, e_4 \mapsto \cod{cd?x}, e_5 \mapsto \cod{bc!x}, e_6 \mapsto \cod{bc?x}, e_7 \mapsto \cod{bd!x}, e_8 \mapsto \cod{bd?x}}$, then its branching structure is $\set{e_1, e_2, e_3, \allowbreak e_4, \set{\set{e_5, e_6}, \set{e_7, e_8}}}$.
For the pomset in \Cref{fig:branching-pomset-nesting}, if $E = \set{e_1, \ldots, e_{14}}$ and $\lambda = \set{e_1 \mapsto \cod{ab!x}, e_2 \mapsto \cod{ab?x}, e_3 \mapsto \cod{ba!x}, e_4 \mapsto \cod{ba?x}, e_5 \mapsto \cod{bd!x}, e_6 \mapsto \cod{bd?x}, e_7 \mapsto \cod{ac!x}, e_8 \mapsto \cod{ac?x}, e_9 \mapsto \cod{ca!x}, e_{10} \mapsto \cod{ca?x}, e_{11} \mapsto \cod{cd!x}, e_{12} \mapsto \cod{cd?x}, e_{13} \mapsto \cod{da!x}, e_{14} \mapsto \cod{da?x}}$, then its branching structure is $\set{e_{13}, e_{14}, \set{\set{e_1, e_2, \set{\set{e_3, e_4}, \allowbreak \set{e_5, e_6}}}, \set{e_7, e_8, \set{\set{e_9, e_{10}}, \set{e_{11}, e_{12}}}}}}$.
By resolving the outer choice and picking its upper branch ($\cod{a->b:x}$), we drop events $e_7, \ldots, e_{12}$ and obtain the middle \npom{} in \Cref{fig:pomset-refine}, with events $e_1, \ldots, e_6, e_{13}, e_{14}$ and branching structure $\set{e_1, e_2, e_{13}, e_{14}, \set{\set{e_3, e_4}, \set{e_5, e_6}}}$.

We now formally define \npoms{}.

\begin{definition}[\Npom{}]
\label{def:branching-pomset}
  A \npom{} is a four-tuple $R = \tpl{E, {\dd}, \lambda, \N}$, where $E$ is a set of events, ${\dd} \subseteq E \times E$ is such that ${\dd}^\star$ (the transitive closure of ${\dd}$) is a partial order on events, $\lambda : E \mapsto \L$ is a labelling function assigning an action to every event, and $\N$ is a branching structure such that the set of leaves of $\N$ is $E$ and no event in $E$ occurs in $\N$ more than once.
  We use $R.E$, $R.{\dd}$, $R.\lambda$ and $R.\N$ to refer to the components of $R$.
\end{definition}

\paragraph{Semantics}
To fire an event in a \npom{}, on top of being minimal it must also be \emph{active}, i.e., it must not be inside any choice.
In other words: it must be a child of the branching structure's root node.
We thus define a set of refinement rules in \Cref{fig:pom-semantics-refine}, written $R \sqsupseteq R'$, which can be used to resolve choices and move events upwards in the branching structure.

\begin{figure}[t]

    \begin{subfigure}{\textwidth}
      \begin{gather*}
      \dfrac{}{\N \sqsupseteq \N}
      [\text{\rulerefl}]
      \quad
      \dfrac{\N \sqsupseteq \N' \sqsupseteq \N''}{\N \sqsupseteq \N''}
      [\text{\ruletrans}]
      \quad
      \dfrac{
        i \in \set{1, 2}
      }{
        \set{\set{\N_1, \N_2}} \cup \N \sqsupseteq \N_i \cup \N
      }
      [\text{\rulechoice}]
      \\[2mm]
      \dfrac{
        \N_1 \sqsupseteq \N_1' \quad \N_2 \sqsupseteq \N_2'
      }{
        \set{\set{\N_1, \N_2}} \cup \N \sqsupseteq \set{\set{\N_1', \N_2'}} \cup \N
      }
      [\text{\rulecongr}]
      \quad
      \dfrac{R.\N \sqsupseteq \N'}{R \sqsupseteq R[\N']}
    \end{gather*}

    \caption{Refinement rules, where we assume for \rulechoice{} and \rulecongr{} that $\set{\N_1, \N_2} \notin \N$.}
    \label{fig:pom-semantics-refine}
  \end{subfigure}
  
  \begin{subfigure}{\textwidth}
    \begin{gather*}
      \dfrac{R.\N \sqsupseteq \emptyset}{R \isFinal}
      \quad
      \dfrac{
        \begin{gathered}
          R \sqsupseteq R'
          \quad
          e \in \amin(R')
          \\
          \forall R'' : R \sqsupseteq R'' \sqsupset R' \Rightarrow e \notin \amin(R'')
        \end{gathered}
      }{
        R \trc[]{e} R'
      }
      \quad
      \dfrac{R \trc{e} R'}{R \tr[]{e} R' - e}
      \quad
      \dfrac{R \tr{e} R'}{R \tr[]{\lambda(e)} R'}
      \end{gather*}

      \caption{Reduction and termination rules.}
      \label{fig:pom-semantics-reduction}
    \end{subfigure}

  \begin{subfigure}{\textwidth}
    \begin{align*}
      \tpl{E, {\dd}, \lambda, \N}[\N'] &= \tpl{E|_{\N'}, {\dd}|_{\N'}, \lambda|_{\N'}, \N'}
      \\
      X|_{\N} &= \text{restricts $X$ only to the events in $\N$}
      \\
      \amin(R) &= \set{e \in R.E ~|~ \nexists e' \in R.E : e' < e} \land e \in R.\N
      \\
      \hat{e} - e &= \hat{e}
      \\
      \set{\C_1, \ldots, \C_n} - e &=
      \begin{cases}
        \set{\C_1, \ldots, \C_{i - 1}, \C_{i + 1}, \ldots, \C_n} & \text{if $C_i = e$}
        \\
        \set{\C_1 - e, \ldots, \C_n - e} & \text{otherwise}
      \end{cases}
      \\
      \set{\N_1, \N_2} - e &= \set{\N_1 - e, \N_2 - e}
      \\
      R - e &= R[R.\N - e]
    \end{align*}

    \caption{Operations on \npoms{}.}
    \label{fig:pom-semantics-operations}
  \end{subfigure}

  \caption{Semantics of \npoms{}.}
  \label{fig:pom-semantics}
\end{figure}

The first two rules, \rulerefl{} and \ruletrans{}, are straightforward.
The third rule, \rulechoice{}, resolves choices.
It states that we can replace a choice with one of its branches.
This rule serves a dual purpose: by applying it to the outer choice of the pomset in \Cref{fig:branching-pomset-nesting} we can fire the event $\cod{ab!x}$ in its first branch; alternatively, by applying it to the pomset in \Cref{fig:branching-pomset} we can discard one branch of the choice and then fire the event $\cod{cd!x}$, which is now minimal.
The latter use corresponds with the partial termination rules for choreographies.
The fourth rule, \rulecongr{} is used for more fine-grained partial termination.
To make the event $\cod{da!x}$ minimal in \Cref{fig:branching-pomset-nesting} we could resolve two choices with \rulechoice{} (and \ruletrans{}).
However, as the rules for partial termination tell us, it is unnecessary to resolve the outer choice.
Instead, we can apply \rulechoice{} to both inner choices and apply \rulecongr{} to the outer choice to update it without unnecessarily resolving it.
Finally, the fifth rule overloads the refinement notation to also apply to \npoms{} themselves: if $R.\N$ can refine to some $\N'$, then $R$ itself can refine to a derived \npom{} with branching structure $\N'$, whose events are restricted to those occurring in $\N'$ and likewise for ${\leq}$ and $\lambda$.

\begin{figure}[t]
  \centering

  \begingroup
  \def\h{1.6}
  \def\hb{1.8}
  \def\v{.4}
  \begin{tikzpicture}[scale = 1, ->, >=stealth, baseline = (anchor)]
    \node (anchor) at (0,0) {};

    \node[event] (ab!) at (0, 0) {$\cod{ab!x}$};
    \node[event] (ab?) at (\h, 0) {$\cod{ab?x}$};

    \draw (ab!) -- (ab?);

    \node[event] (bd!) at (\h + \hb, -\v) {$\cod{bd!x}$};
    \node[event] (bd?) at (2*\h + \hb, -\v) {$\cod{bd?x}$};

    \draw (bd!) -- (bd?);

    \node[event] (cd!) at (2*\h + 2*\hb, 0) {$\cod{cd!x}$};
    \node[event] (cd?) at (3*\h + 2*\hb, 0) {$\cod{cd?x}$};

    \draw (cd!) -- (cd?);

    \draw[shorten < = 2pt, shorten > = -1pt] (ab?) |- (bd!);
    \draw[shorten > = 2pt] (bd?) -| (cd?);

    \pomsetbox{(ab!) (bd?) (cd?)}
  \end{tikzpicture}
  \endgroup

  \medskip

  Obtained by applying \rulechoice{} to the pomset in \Cref{fig:branching-pomset}.

  \bigskip

  \begingroup
  \def\h{1.6}
  \def\hb{.2}
  \def\v{1.2}
  \def\vb{.4}
  \begin{tikzpicture}[scale = 1, ->, >=stealth, baseline = (anchor)]
    \node (anchor) at (0,0) {};

    \node[event] (ab!) at (0, \v) {$\cod{ab!x}$};
    \node[event] (ab?) at (\h, \v) {$\cod{ab?x}$};

    \draw (ab!) -- (ab?);

    \node[event] (ba!) at (2*\h + \hb, \v + \vb) {$\cod{ba!x}$};
    \node[event] (ba?) at (3*\h + \hb, \v + \vb) {$\cod{ba?x}$};
    \node[event] (bd!) at (2*\h + \hb, \v + -\vb) {$\cod{bd!x}$};
    \node[event] (bd?) at (3*\h + \hb, \v + -\vb) {$\cod{bd?x}$};

    \draw (ba!) -- (ba?);
    \draw (bd!) -- (bd?);

    \pomsetboxb{ba}{(ba!) (ba?)}
    \pomsetboxb{bd}{(bd!) (bd?)}

    \choiceboxb[2]{b}{(ba) (bd)}

    \draw[shorten < = 2pt, shorten > = -1pt] (ab?) |- (ba!);
    \draw[shorten < = 2pt, shorten > = -1pt] (ab?) |- (bd!);

    \node[event] (da!) at (4*\h + 3*\hb, 0) {$\cod{da!x}$};
    \node[event] (da?) at (5*\h + 3*\hb, 0) {$\cod{da?x}$};

    \draw (da!) -- (da?);
    \draw[shorten > = 2pt] (bd?) -| (da!);
    \draw[shorten > = 2pt] (ba?) -| (da?);

    \pomsetbox{(ab!) (da?) (b)}
  \end{tikzpicture}
  \endgroup

  \medskip

  Obtained by applying \rulechoice{} to the outer choice of the pomset in \Cref{fig:branching-pomset-nesting}.

  \bigskip

  \begingroup
  \def\h{1.6}
  \def\hb{.2}
  \def\v{1.2}
  \def\vb{.4}
  \begin{tikzpicture}[scale = 1, ->, >=stealth, baseline = (anchor)]
    \node (anchor) at (0,0) {};

    \node[event] (ab!) at (0, \v) {$\cod{ab!x}$};
    \node[event] (ab?) at (\h, \v) {$\cod{ab?x}$};
    \node[event] (ac!) at (0, -\v) {$\cod{ac!x}$};
    \node[event] (ac?) at (\h, -\v) {$\cod{ac?x}$};

    \draw (ab!) -- (ab?);
    \draw (ac!) -- (ac?);

    \node[event] (ba!) at (2*\h + \hb, \v + \vb) {$\cod{ba!x}$};
    \node[event] (ba?) at (3*\h + \hb, \v + \vb) {$\cod{ba?x}$};
    \phantom{\node[event] (bd!) at (2*\h + \hb, \v + -\vb+0.1) {$\cod{bd!x}$};}
    \phantom{\node[event] (bd?) at (3*\h + \hb, \v + -\vb+0.1) {$\cod{bd?x}$};}

    \draw (ba!) -- (ba?);

    \phantom{\pomsetbox[ba]{(ba!) (ba?)}}
    \phantom{\pomsetbox[bd]{(bd!) (bd?)}}

    \phantom{\choicebox[b]{(ba) (bd)}}

    \draw[shorten < = 2pt, shorten > = -1pt] (ab?) |- (ba!);

    \node[event] (ca!) at (2*\h + \hb, -\v + \vb) {$\cod{ca!x}$};
    \node[event] (ca?) at (3*\h + \hb, -\v + \vb) {$\cod{ca?x}$};
    \phantom{\node[event] (cd!) at (2*\h + \hb, -\v + -\vb+0.1) {$\cod{cd!x}$};}
    \phantom{\node[event] (cd?) at (3*\h + \hb, -\v + -\vb+0.1) {$\cod{cd?x}$};}

    \draw (ca!) -- (ca?);

    \phantom{\pomsetbox[ca]{(ca!) (ca?)}}
    \phantom{\pomsetbox[cd]{(cd!) (cd?)}}

    \phantom{\choicebox[c]{(ca) (cd)}}

    \draw[shorten < = 2pt, shorten > = -1pt] (ac?) |- (ca!);

    \pomsetboxb{ab}{(ab!) (b)}
    \pomsetboxb{ac}{(ac!) (c)}
    \choiceboxb[2]{a}{(ab) (ac)}

    \node[event] (da!) at (4*\h + 3*\hb, 0) {$\cod{da!x}$};
    \node[event] (da?) at (5*\h + 3*\hb, 0) {$\cod{da?x}$};

    \draw (da!) -- (da?);
    \draw[shorten > = 2pt] (ba?) -| (da?);
    \draw[shorten > = 2pt] (ca?) -| (da?);

    \pomsetbox{(a) (da?)}
  \end{tikzpicture}
  \endgroup

  \medskip

  Obtained by applying \rulecongr{} to the outer and \rulechoice{} to both inner choices of the pomset in \Cref{fig:branching-pomset-nesting}.

  \caption{Three refined pomsets.}
  \label{fig:pomset-refine}
\end{figure}

The reduction and termination rules are defined in \Cref{fig:pom-semantics-reduction}.
The first rule simply states that a pomset can terminate if its branching structure can reduce to the empty set.
The second rule defines the conditions for \emph{enabling} an event $e$, written $R \trc{e} R'$.
A \npom{} $R$ can enable $e$ by refining to $R'$ if $e$ is both minimal and active in $R'$ ($e \in \amin(R')$), and if there is no other refinement in between in which $e$ is already minimal and active.
In other words, $R$ may only refine as far as strictly necessary to enable $e$.
This rule implements the same idea as partial termination, with the subtle difference that, whereas partial termination tries to remove any occurrence of a participant, in this case $e$ is actually an event in $R$ itself.
As the two notions are very similar, we use the same notation for enabling events in \npoms{} as for partial termination.
Finally, the last two rules state that, if $R$ can enable $e$ by refining to $R'$, then it can fire $e$ by reducing to $R' - e$, which is the \npom{} obtained by removing $e$ from $R'$ (\Cref{fig:pom-semantics-operations}).
This reduction is defined both on $e$'s label and on the event itself, the latter for internal use in proofs since $\lambda(e)$ is typically not unique but $e$ is.


\begin{example}
  \strut
  \begin{itemize}
    \item $R \trc{e} R'$, where $R$ is the \npom{} in \Cref{fig:branching-pomset}, $R'$ is the topmost \npom{} in \Cref{fig:pomset-refine} and $e$ is the event with label $\cod{cd!x}$.

    \item $R \trc{e} R'$, where $R$ is the \npom{} in \Cref{fig:branching-pomset-nesting}, $R'$ is the middle \npom{} in \Cref{fig:pomset-refine} and $e$ is the event with label $\cod{ab!x}$.

    \item $R \trc{e} R'$, where $R$ is the \npom{} in \Cref{fig:branching-pomset-nesting}, $R'$ is the middle \npom{} in \Cref{fig:pomset-refine} and $e$ is the event with label $\cod{da!x}$.
  \end{itemize}
\end{example}





\section{\Npoms{} for choreographies}
\label{sec:chor-to-pom}

In this section we formalise the construction of a \npom{} for a choreography $c$ and we show that the pomset semantics for the \npom{} are bisimilar to the operational semantics for $c$.

We have given examples of choreographies and corresponding \npoms{} in \Cref{fig:branching-pomset,fig:branching-pomset-nesting}.
Formally, the rules for the construction of a \npom{} for a choreography $c$, written $\sem{c}$, are defined in \Cref{fig:chor-to-pom}.
Most rules are as expected.
We highlight the rules for operators.
\begin{itemize}
  \item The rule for parallel composition ($\sem{c_1 \parallel c_2}$) takes the pairwise union of all components.

  \item The rule for sequential composition ($\sem{c_1 \seq c_2}$) also adds dependencies to ensure that, for every $\pt{a}$, all events with subject $\pt{a}$ in $\sem{c_1}$ (denoted $E_{1_{\pt{a}}}$) must precede all events with subject $\pt{a}$ in $\sem{c_2}$.
  This matches the reduction rule for weak sequential composition of choreographies (\Cref{fig:chor-semantics-reduction}), as events in $\sem{c_2}$ are only required to wait for events in $\sem{c_1}$ whose subject is the same.

  \item The rule for choice ($\sem{c_1 \alt c_2}$) adds a single top-level choice in the branching structure to choose between the pomsets for $c_1$ and $c_2$.

  \item The rule for loops ($\sem{c^*}$) encodes a loop as a choice between terminating ($\nil$) and unfolding one iteration of the loop ($c \seq c^*$).
  This results in a pomset of infinite size.
  We note that our theoretical results still hold even on infinite pomsets, but that any analysis of an infinite pomset will have to be symbolic.
  However, since the focus of this paper is on supporting choices, we do not discuss this further and leave symbolic analyses for loops for future work.
\end{itemize}

As an example, we construct part of the \npom{} in \Cref{fig:branching-pomset}: $(\cod{b->c:x} \alt \cod{b->d:x}) \seq \cod{c->d:x}$ (thus omitting $\cod{a->b:x}$).
Let $\sem{\cod{b->c:x}} = \tpl{\set{e_1, e_2}, {\leq_1}, \lambda_1, \set{e_1, e_2}}$, $\sem{\cod{b->d:x}} = \tpl{\set{e_3, e_4}, {\leq_2}, \lambda_2, \set{e_3, e_4}}$ and $\sem{\cod{c->d:x}} = \tpl{\set{e_5, e_6}, {\leq_3}, \allowbreak \lambda_3, \set{e_5, e_6}}$ as in \Cref{fig:chor-to-pom}.
First, $\sem{\cod{b->c:x} \alt \cod{b->d:x}} = \tpl{\set{e_1, \ldots, e_4}, {\leq_1} \cup {\leq_2}, \lambda_1 \cup \lambda_2, \set{\set{\set{e_1, e_2}, \set{e_3, e_4}}}}$; this is the pairwise union of the first three components, with the branching structure adding a choice between the two branches.
Then $\sem{(\cod{b->c:x} \alt \cod{b->d:x}) \seq \cod{c->d:x}} = \tpl{\set{e_1, \ldots, e_6}, {\leq_1} \cup {\leq_2} \cup {\leq_3} \cup \set{e_2 \leq e_5, e_4 \leq e_6}, \lambda_1 \cup \lambda_2 \cup \lambda_2, \set{e_5, e_6, \set{\set{e_1, e_2}, \set{e_3, e_4}}}}$; again, this is the pairwise union of all components, with the addition of two dependencies: $e_2 \leq e_5$ represents the arrow in \Cref{fig:branching-pomset} from $\cod{bc?x}$ to $\cod{cd!x}$ as they both have subject $\pt{c}$, $e_4 \leq e_6$ represents the arrow from $\cod{bd?x}$ to $\cod{cd?x}$ as they both have subject $\pt{d}$.
There are no direct dependencies between $e_1$ ($\cod{bc!x}$) or $e_3$ ($\cod{bd!x}$) and either $e_5$ or $e_6$, as the latter two do not have subject $\pt{b}$.

\begin{figure}[t]
  \begin{align*}
    \sem{\nil} &= \tpl{\emptyset, \emptyset, \emptyset, \emptyset}
    \\
    \sem{\cod{a->b:x}} &= \tpl{
      \set{e_1, e_2},
      \set{e_1 \dd e_1, e_1 \dd e_2, e_2 \dd e_2},
      \set{e_1 \mapsto \cod{ab!x}, e_2 \mapsto \cod{ab?x}},
      \set{e_1, e_2}
    }
    \\
    \sem{\cod{ab?x}} &= \tpl{
      \set{e},
      \set{e \dd e},
      \set{e \mapsto \cod{ab?x}},
      \set{e}
    }
    \\
    \sem{c_1 \dagger c_2} &= \sem{c_1} \dagger \sem{c_2} \text{ for } \dagger \in \set{\seq, \alt, \parallel}
    \\
    \sem{c^*} &= \sem{(c \seq c^*) \alt \nil}
    \\[2mm]
    R_1 \parallel R_2 &= \tpl{
      E_1 \cup E_2,
      {\dd_1} \cup {\dd_2},
      \lambda_1 \cup \lambda_2,
      \N_1 \cup \N_2
    }
    \\
    R_1 \seq R_2 &= \tpl{
      E_1 \cup E_2,
      {\dd_1} \cup {\dd_2} \cup {\textstyle \bigcup_{\pt{a} \in \A}} \; E_{1_{\pt{a}}} \times E_{2_{\pt{a}}},
      \lambda_1 \cup \lambda_2,
      \N_1 \cup \N_2
    }
    \\
    R_1 \alt R_2 &= \tpl{
      E_1 \cup E_2,
      {\dd_1} \cup {\dd_2},
      \lambda_1 \cup \lambda_2,
      \set{\set{\N_1, \N_2}}
    }
  \end{align*}

  \caption{Pomset interpretation of choreographies, where $R_i = \tpl{E_i, {\dd_i}, \lambda_i, \N_i}$ for $i \in \set{1, 2}$, $\A$ is the set of all participants ($\pt{a}, \pt{b}, \ldots$) and $E_{i_{\pt{a}}}$ is the subset of events in $E_i$ with subject $\pt{a}$.}
  \label{fig:chor-to-pom}
\end{figure}

\paragraph{Bisimulation}
For any given a choreography $c$ we can derive two labelled transition systems: 
one from the operational semantics in \Cref{fig:chor-semantics} over $c$, and one from the pomset semantics in \Cref{fig:pom-semantics} over the \npom{} $\sem{c}$ produced by 
the rules in \Cref{fig:chor-to-pom}.
In the remainder of this section we show that the two transition systems are bisimilar.


Two systems are language equivalent (or trace equivalent) if their languages are the same, i.e., if they accept the same set of words (or traces).
Two systems are bisimilar if each of them can simulate the other, i.e., if they cannot be distinguished from each other just by looking at their behaviour.
This is a stronger notion of equivalence than language equivalence: if two systems are bisimilar then they are also language equivalent, but the inverse is not necessarily true.

\begin{example}
  \strut
  \begin{itemize}
    \item $\cod{a->b:x} \seq (\cod{b->a:x} \alt \cod{b->a:y})$ is language equivalent but not bisimilar to $(\cod{a->b:x} \seq \cod{b->a:x}) \alt (\cod{a->b:x} \seq \cod{b->a:y})$.
    In the former the choice between $\cod{b->a:x}$ and $\cod{b->a:y}$ is made only after $\cod{a->b:x}$, while in the latter the choice is made up front.
    As a result, it is possible in the latter system to fire $\cod{ab!x} \seq \cod{ab?x}$ and then end up in a state where $\cod{ba!x}$ cannot be fired because the branch with $\cod{b->a:y}$ was chosen --- or the other way around; in the former system it is always possible to fire both $\cod{ba!x}$ and $\cod{ba!y}$.

    \item $\cod{a->b:x}$ is bisimilar to $\cod{a->b:x} \alt \cod{a->b:x}$.
    While the latter contains a choice, the two systems cannot be distinguished by their behaviour.
    In both cases, the only allowed action is $\cod{ab!x}$ and then $\cod{ab?x}$.
  \end{itemize}
\end{example}

Formally, two transition systems $A_1, A_2$ are bisimilar, written $A_1 \sim A_2$, if there exists a bisimulation relation $\bisim$ relating the states of $A_1$ and $A_2$ which relates their initial states~\cite{sangiorgi2011introduction}.
The relation $\bisim$ is a bisimulation relation if, for every pair of states $\tpl{p, q} \in \bisim$:
\begin{itemize}
  \item If $p \tr{\ell} p'$ then $q \tr{\ell} q'$ and $\tpl{p', q'} \in \bisim$ for some $q'$, and vice-versa.

  \item If $p \isFinal$ then $q \isFinal$, and vice-versa.
\end{itemize}
In other words: if one of the two can perform a step, then the other can perform a matching step such that the resulting states are again in the bisimulation relation.

This is also the approach we follow when proving that $c \sim \sem{c}$ for all (dependently guarded) choreographies $c$: we define a relation $\bisim = \set{\tpl{c, \sem{c}} \mid \text{$c$ is a dependently guarded choreography}}$ relating all dependently guarded choreographies with their interpretation as \npom{} by the rules in \Cref{fig:chor-to-pom}.
We then show that:
\begin{itemize}
  \item If $c \tr{\ell} c'$ then $\sem{c} \tr{\ell} \sem{c'}$ (\Cref{lem:if-chor-step-then-pom-step}).

  \item If $\sem{c} \tr{\ell} R'$ then $c \tr{\ell} c'$ such that $R' = \sem{c'}$ (\Cref{lem:if-pom-step-then-chor-step}).

  \item If $c \isFinal$ then $\sem{c} \isFinal$ (\Cref{lem:if-chor-term-then-pom-term}).

  \item If $\sem{c} \isFinal$ then $c \isFinal$ (\Cref{lem:if-pom-term-then-chor-term}).
\end{itemize}

Together these lemmas prove that $c \sim \sem{c}$ for all dependently guarded $c$ (\Cref{thm:chor-bisim-pom}).
Most of the proofs are straightforward by structural induction on $c$.
Of particular interest, however, are the two reduction lemmas in the case of weak sequential composition, i.e., if $c_1 \seq c_2 \tr{\ell} c_1' \seq c_2'$ in \Cref{lem:if-chor-step-then-pom-step} and if $\sem{c_1 \seq c_2} \tr{e} R'$ where $e$ is an event in $\sem{c_2}$ in \Cref{lem:if-pom-step-then-chor-step}.
To prove these specific cases we need to show a correspondence between partial termination and enabling events.
We do this with \Cref{lem:max-refinement}, in which we show two directions simultaneously. If the choreography $c_1$ can partially terminate for the subject of an action $\ell$ in $c_2$ then the \npom{} $\sem{c_1 \seq c_2}$ can enable the corresponding event. Conversely, if $\sem{c_1 \seq c_2}$ can enable some event in $\sem{c_2}$ then the choreography $c_1$ can partially terminate for the subject of its label.
When proving these cases in \Cref{lem:if-chor-step-then-pom-step,lem:if-pom-step-then-chor-step}, we then only have to show that the preconditions of \Cref{lem:max-refinement} hold.

In the following, a number of technical lemmas and most of the proofs are omitted in favour of informal proof sketches or highlights.
The omitted proofs and technical lemmas can be found in a separate technical report~\cite{techreport}.


\begin{restatable}{lemma}{lemmaxrefinement}
\label{lem:max-refinement}
  Let $c_1$ and $c_2$ be dependently guarded choreographies.
  Let $c_2 \tr{\ell} c_2'$ and $\sem{c_2} \trc{e} R_2'$ such that $\lambda(e) = \ell$ and $\sem{c_2'} = R_2' - e$.
  \begin{enumerate}[label=(\alph*)]
    \item If $c_1 \trpt{\ell} c_1'$ then $\sem{c_1 \seq c_2} \trc{e} \sem{c_1'} \seq R_2'$.
    \item If $\sem{c_1 \seq c_2} \trc{e} R_1' \seq R_2'$ then $c_1 \trpt{\lambda(e)} c_1'$ and $\sem{c_1'} = R_1'$.
  \end{enumerate}
\end{restatable}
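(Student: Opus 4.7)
I would prove (a) and (b) simultaneously by structural induction on $c_1$, leveraging the observation that the rules of \Cref{fig:chor-semantics-partial-termination} and the minimality condition of $\trc{e}$ in \Cref{fig:pom-semantics-reduction} are two manifestations of the same principle: resolve choices only as far as strictly necessary to let the action through. In both parts, the structural case on $c_1$ provides the scaffolding, and the hypotheses $c_2 \tr{\ell} c_2'$, $\sem{c_2} \trc{e} R_2'$ and $\sem{c_2'} = R_2' - e$ are threaded through unchanged.

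In the base cases $c_1 \in \set{\nil, \cod{a->b:x}, \cod{ab?x}}$, the pomset $\sem{c_1}$ has no choice nodes and the partial termination rules force $c_1' = c_1$ whenever $c_1 \trpt{\ell}$ holds. The required refinement of $\sem{c_1 \seq c_2}$ is then obtained by lifting $\sem{c_2} \trc{e} R_2'$ through the sequential composition: the cross-dependencies introduced by $\seq$ do not obstruct $e$ because the side conditions on the partial termination rules rule out $\subj{\ell}$ appearing as a subject in $c_1$. The converse direction is symmetric: an enabling refinement cannot act on $\sem{c_1}$, so $R_1' = \sem{c_1}$.

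In the inductive step, the cases $c_1 = d_1 \parallel d_2$ and $c_1 = d_1 \seq d_2$ are routine: partial termination distributes pointwise over the two operands, as does the decomposition of a minimal refinement, so the induction hypotheses on $d_1$ and $d_2$ compose directly. The delicate case is $c_1 = d_1 \alt d_2$, where the three sub-rules for partial termination of a choice match the interplay of \rulechoice{} and \rulecongr{}. If only $d_1$ (resp.\ only $d_2$) partially terminates, apply \rulechoice{} to drop the other branch and continue via the IH; if both partially terminate, apply \rulecongr{} and invoke the IH on each branch. Conversely, a minimal refinement of the top-level choice either uses \rulechoice{}, in which case minimality forces the discarded branch not to partially terminate, or it uses \rulecongr{} on both branches. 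For $c_1 = d^*$ the unfolding $\sem{d^*} = \sem{(d \seq d^*) \alt \nil}$ reduces the case to that of the choice, and dependent guardedness ensures the loop body $d$ either partially terminates without reducing or not at all, mirroring the two outcomes of $c^* \trpt{\ell}$.

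The main obstacle lies in the minimality condition $\forall R'' : R \sqsupseteq R'' \sqsupset R' \Rightarrow e \notin \amin(R'')$. For part (a), one must certify that the refinement built from the partial termination derivation is not preceded by a strictly smaller refinement already enabling $e$; this follows from the fact that the partial termination rules never discard a branch that itself admits partial termination, so no \rulechoice{} step in the construction is avoidable. For part (b), the difficulty is to extract, from a minimal refinement of the composite $\sem{c_1 \seq c_2}$, two sub-refinements local to $\sem{c_1}$ and $\sem{c_2}$ that are themselves minimal for their respective jobs. I expect this to require a small auxiliary normalisation lemma showing that refinement steps acting disjointly on the two operands of $\seq$ commute and can be reordered without enlarging the refinement; once that factorisation is in hand, the induction closes with $R_1' = \sem{c_1'}$ read off directly from the $c_1$-local part.
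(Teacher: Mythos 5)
Your plan matches the paper's own proof: both argue (a) and (b) simultaneously by structural induction on $c_1$, with (a) reducing to showing that $e$ is minimal and active in $\sem{c_1'} \seq R_2'$ and that this is the earliest such refinement, and (b) reducing to factoring the refinement of $\sem{c_1 \seq c_2}$ into refinements of the subexpressions of $c_1$ so the induction hypothesis and the partial-termination rules can be applied. Your identification of the choice case as the delicate one, the treatment of loops via unfolding plus dependent guardedness, and the need to discharge the minimality side-condition all line up with the paper's argument (whose full details are deferred to its technical report).
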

\begin{proof}[Proof sketch]
  This proof is by structural induction on $c_1$.
  Although the details require careful consideration, it is conceptually straightforward:
  every case in (a) consists of showing that $e$ is minimal and active in $\sem{c_1'} \seq R_2'$ and that $\sem{c_1'} \seq R_2'$ is the first refinement for which this is true, and then applying the second rule in \Cref{fig:pom-semantics-reduction};
  every case in (b) consists of showing that $\sem{c_3 \seq c_2} \trc{e} \sem{c_3'} \seq R_2'$ for some subexpression $c_3$ of $c_1$ and similarly for $c_4$ (e.g., when $c_1 = c_3 \alt c_4$), then applying the induction hypothesis (b) to obtain $c_3 \trpt{\ell} c_3'$ and $c_4 \trpt{\ell} c_4'$, and finally applying the partial termination rules in \Cref{fig:chor-semantics-partial-termination}.
\end{proof}

\begin{restatable}{lemma}{lemifchorstepthenpomstep}
\label{lem:if-chor-step-then-pom-step}
  Let $c$ be a dependently guarded choreography.
  If $c \tr{\ell} c'$ then $\sem{c} \tr{\ell} \sem{c'}$.
\end{restatable}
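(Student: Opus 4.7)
I would proceed by induction on the derivation of $c \tr{\ell} c'$, splitting on which rule from \Cref{fig:chor-semantics-reduction} is applied last. The base cases ($c = \cod{a->b:x}$ firing $\cod{ab!x}$, and $c = \cod{ab?x}$ firing $\cod{ab?x}$) are direct: in each case the pomset $\sem{c}$ consists of one or two events in a tight partial order, the relevant event $e$ with $\lambda(e) = \ell$ is already active and minimal, so $\sem{c} \trc{e} \sem{c}$ by \rulerefl, and $\sem{c} - e$ is syntactically $\sem{c'}$.

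For the inductive cases involving parallel composition ($c_1 \parallel c_2$) and choice ($c_1 \alt c_2$), the argument is routine. In the parallel case, the component pomsets share no events and no dependencies, so any refinement and event-enabling sequence witnessing $\sem{c_i} \tr{\ell} \sem{c_i'}$ (obtained from the IH) lifts immediately to $\sem{c_1 \parallel c_2}$, whose branching structure is simply the union of the two. In the choice case, one additional use of \rulechoice{} at the top of the branching structure selects the active branch $\sem{c_i}$, after which the IH-provided refinement of $\sem{c_i}$ continues the derivation.

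The main obstacle, and the reason \Cref{lem:max-refinement} was set up, is weak sequential composition $c_1 \seq c_2$. Here two subrules must be handled. When $c_1 \tr{\ell} c_1'$, one uses the IH on $c_1$ and checks that the enabling derivation in $\sem{c_1}$ is unaffected by the added cross-dependencies ${\bigcup_{\pt{a}}} E_{1_{\pt{a}}} \times E_{2_{\pt{a}}}$, since these only flow \emph{out of} $\sem{c_1}$ and therefore do not block minimality of any event in $\sem{c_1}$. The delicate subrule is when $c_1 \trpt{\ell} c_1'$ and $c_2 \tr{\ell} c_2'$: by the IH applied to $c_2$, there exist $e$ and $R_2'$ with $\sem{c_2} \trc{e} R_2'$, $\lambda(e) = \ell$, and $\sem{c_2'} = R_2' - e$; then \Cref{lem:max-refinement}(a) yields $\sem{c_1 \seq c_2} \trc{e} \sem{c_1'} \seq R_2'$, whence $\sem{c_1 \seq c_2} \tr{\ell} \sem{c_1'} \seq (R_2' - e)$, and one verifies that this coincides with $\sem{c_1' \seq c_2'}$ by inspecting the definitions in \Cref{fig:chor-to-pom}.

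Finally, the loop case $c^* \tr{\ell} c'' \seq c^*$ (derived from $c \tr{\ell} c''$) is dispatched by the definitional equality $\sem{c^*} = \sem{(c \seq c^*) \alt \nil}$: the already-handled cases for choice and sequential composition, combined with the IH on $c$, give $\sem{(c \seq c^*) \alt \nil} \tr{\ell} \sem{c'' \seq c^*}$, which is the required step. Dependent guardedness of $c$ is crucial here, because it is what makes the partial-termination behaviour of the unfolded loop agree with that of $c^*$ itself, and hence what allows \Cref{lem:max-refinement} to apply inside any surrounding sequential context.
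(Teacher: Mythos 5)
Your proposal is correct and follows essentially the same route as the paper's own (sketched) proof: structural induction on $c$, with all cases routine except weak sequential composition under partial termination, where the induction hypothesis on $c_2$ supplies exactly the premises of \Cref{lem:max-refinement} and part (a) of that lemma closes the case. The additional details you supply (cross-dependencies flowing only out of $\sem{c_1}$, the identification of $\sem{c_1'} \seq (R_2'-e)$ with $\sem{c_1'\seq c_2'}$, and the loop case via the unfolding $\sem{c^*}=\sem{(c\seq c^*)\alt\nil}$) are consistent with what the paper leaves implicit.
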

\begin{proof}[Proof sketch]
  This proof is by structural induction on $c$.
  We note that, if $c = c_1 \seq c_2$ and $c' = c_1' \seq c_2'$, i.e., when partial termination is applied, then the premises of \Cref{lem:max-refinement} hold by the induction hypothesis and the result swiftly follows.
  All other cases are straightforward.
\end{proof}

\begin{restatable}{lemma}{lemifpomstepthenchorstep}
\label{lem:if-pom-step-then-chor-step}
  Let $c$ be a dependently guarded choreography.
  If $\sem{c} \tr{\ell} R'$ for some $R'$ then $c \tr{\ell} c'$ such that $R' = \sem{c'}$.
\end{restatable}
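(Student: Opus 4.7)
The plan is to prove this by structural induction on $c$, mirroring \Cref{lem:if-chor-step-then-pom-step} but in the reverse direction. The assumption $\sem{c} \tr{\ell} R'$ unpacks as: there is an event $e$ with $\lambda(e) = \ell$ such that $\sem{c} \trc{e} R''$ and $R' = R'' - e$. For each case, the strategy is to identify the subexpression of $c$ from which $e$ originates, apply the induction hypothesis there to obtain a matching choreography step, and then assemble a witness $c'$ using the corresponding reduction rule of \Cref{fig:chor-semantics-reduction} so that $R' = \sem{c'}$.

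The base cases $c \in \set{\nil, \cod{a->b:x}, \cod{ab?x}}$ are immediate from direct inspection of $\sem{c}$: there is at most one event that can be minimal and active, and its firing corresponds precisely to the unique operational reduction. For $c = c_1 \parallel c_2$ and $c = c_1 \alt c_2$, the event $e$ belongs to either $\sem{c_1}$ or $\sem{c_2}$, and the induction hypothesis on the relevant side together with the matching reduction rule yields the result; the subcase $c = c_1 \seq c_2$ with $e \in \sem{c_1}$ is analogous. The delicate subcase is $c = c_1 \seq c_2$ with $e$ lying in $\sem{c_2}$: here the plan is to decompose the enabling refinement $R''$ as $R_1' \seq R_2'$ and apply \Cref{lem:max-refinement}(b) to obtain $c_1 \trpt{\ell} c_1'$ with $\sem{c_1'} = R_1'$. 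Combining this with $c_2 \tr{\ell} c_2'$ (supplied by the induction hypothesis on $c_2$ applied to $\sem{c_2} \trc{e} R_2'$) and the weak sequencing rule yields $c_1 \seq c_2 \tr{\ell} c_1' \seq c_2'$, whose semantics equals $R_1' \seq (R_2' - e) = R'$.

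The final case is the loop $c = c^*$. Using the encoding $\sem{c^*} = \sem{(c \seq c^*) \alt \nil}$, the event $e$ must come from the $c \seq c^*$ branch of the top-level choice, since the other branch has no events. Dependent guardedness of $c^*$ forces any partial termination of $c$ to be trivial, so an enabling refinement of $\sem{c^*}$ can always be normalised so that $e$ lies in the first copy of $c$ in the unfolding; the induction hypothesis on $c$ then supplies $c \tr{\ell} c'$ with $\sem{c'}$ matching the relevant component of $R'$, after which the loop reduction rule delivers $c^* \tr{\ell} c' \seq c^*$ with $\sem{c' \seq c^*} = R'$. The main obstacle throughout is the weak-sequencing subcase, which is exactly the difficulty that \Cref{lem:max-refinement}(b) was designed to absorb; the secondary subtlety is the infinite structure of $\sem{c^*}$, which is tamed by dependent guardedness.
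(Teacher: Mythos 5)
Your proposal matches the paper's proof essentially step for step: structural induction on $c$, with the weak-sequencing subcase ($e$ in $\sem{c_2}$) discharged by establishing $\sem{c_2} \tr{\ell} R_2'$, applying the induction hypothesis, and invoking \Cref{lem:max-refinement}(b), and the loop case handled by showing the enabled event lies in the first unfolded copy of the body (the paper delegates this to a technical lemma; your appeal to dependent guardedness is the substance of that lemma). No divergence worth noting.
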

\begin{proof}[Proof sketch]
  This proof is by structural induction on $c$.
  We highlight two cases:
  \begin{itemize}
    \item If $c = c_1^*$ then we use a technical lemma to show that $R' = R_1' \seq \sem{c_1^*}$ such that $\sem{c_1} \tr{\ell} R_1'$.
    It then follows from the induction hypothesis that $c_1 \tr{\ell} c_1'$ such that $\sem{c_1'} = R_1'$.
    The remainder is straightforward.

    \item If $c = c_1 \seq c_2$ then $\sem{c} = \sem{c_1} \seq \sem{c_2}$.
    If $e$ is an event in $\sem{c_2}$ then we proceed to show that $\sem{c_2} \tr{\ell} R_2'$, at which point we can apply the induction hypothesis.
    We have then satisfied the premises of \Cref{lem:max-refinement}.
    The remainder is straightforward.
  \end{itemize}
  All other cases are straightforward.
\end{proof}

\begin{restatable}{lemma}{lemifchortermthenpomterm}
\label{lem:if-chor-term-then-pom-term}
  Let $c$ be a dependently guarded choreography.
  If $c \isFinal$ then $\sem{c} \isFinal$.
\end{restatable}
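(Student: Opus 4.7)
The plan is to prove the lemma by structural induction on $c$, using the characterisation $R \isFinal$ iff $R.\N \sqsupseteq \emptyset$ (from \Cref{fig:pom-semantics-reduction}) to reduce everything to a question about refinement of branching structures. The base case $c = \nil$ is immediate because $\sem{\nil}.\N = \emptyset$ and \rulerefl{} applies; the atomic communication cases $\cod{a->b:x}$ and $\cod{ab?x}$ never satisfy $c \isFinal$, so nothing needs to be shown for them.

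For the inductive step I would handle the four operator cases in turn. If $c = c_1 \alt c_2$ then $c \isFinal$ forces $c_i \isFinal$ for some $i \in \set{1, 2}$; since $\sem{c}.\N = \set{\set{\sem{c_1}.\N, \sem{c_2}.\N}}$, a single application of \rulechoice{} selects the $i$-th branch, and the induction hypothesis together with \ruletrans{} finishes the case. If $c = c_0^*$ then $c \isFinal$ holds automatically; here $\sem{c} = \sem{(c_0 \seq c_0^*) \alt \nil}$, so the top-level branching structure is $\set{\set{\sem{c_0 \seq c_0^*}.\N, \emptyset}}$, and \rulechoice{} applied to the right (empty) branch yields $\emptyset$ in one step, so the infinite unfolding of the body is irrelevant. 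For $c = c_1 \seq c_2$ and $c = c_1 \parallel c_2$, both $c_1 \isFinal$ and $c_2 \isFinal$ are available, $\sem{c}.\N = \sem{c_1}.\N \cup \sem{c_2}.\N$, and the induction hypothesis gives $\sem{c_i}.\N \sqsupseteq \emptyset$ for both $i$.

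The main obstacle is that none of the rules in \Cref{fig:pom-semantics-refine} directly lifts a refinement performed on one summand of a top-level union to the whole, since \rulecongr{} operates only inside a choice node. I therefore expect to need an auxiliary \emph{monotonicity lemma}: if $\N_1 \sqsupseteq \N_1'$ and the events occurring in $\N_2$ are disjoint from those in $\N_1$, then $\N_1 \cup \N_2 \sqsupseteq \N_1' \cup \N_2$. I would prove this by induction on the derivation of $\N_1 \sqsupseteq \N_1'$. The \rulerefl{} and \ruletrans{} cases are routine; for \rulechoice{} and \rulecongr{} the only subtle point is verifying that the side condition that a choice node does not appear elsewhere in the surrounding context continues to hold after adjoining $\N_2$, which follows from the uniqueness-of-events condition in \Cref{def:branching-pomset}. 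Armed with this lemma, the sequential and parallel cases conclude by chaining $\sem{c_1}.\N \cup \sem{c_2}.\N \sqsupseteq \emptyset \cup \sem{c_2}.\N = \sem{c_2}.\N \sqsupseteq \emptyset$ through \ruletrans{}, completing the induction.
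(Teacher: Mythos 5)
Your proof is correct and follows the same route as the paper: structural induction on $c$, with the only non-trivial content being the case analysis the paper dismisses as straightforward. The auxiliary monotonicity lemma you identify (lifting $\N_1 \sqsupseteq \N_1'$ to $\N_1 \cup \N_2 \sqsupseteq \N_1' \cup \N_2$, which goes through because \rulechoice{} and \rulecongr{} already carry an arbitrary union context $\cup\,\N$) is exactly the kind of technical lemma the paper omits and defers to its technical report, so your write-up is, if anything, more complete than the published sketch.
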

\begin{proof}[Proof sketch]
  This proof is by structural induction on $c$.
  All cases are straightforward.
\end{proof}

\begin{restatable}{lemma}{lemifpomtermthenchorterm}
\label{lem:if-pom-term-then-chor-term}
  Let $c$ be a dependently guarded choreography.
  If $\sem{c} \isFinal$ then $c \isFinal$.
\end{restatable}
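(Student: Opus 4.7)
The plan is to proceed by structural induction on the dependently guarded choreography $c$, mirroring the style of the three preceding lemmas. The main technical ingredient is a purely structural decomposition property of the refinement relation $\sqsupseteq$ on branching structures, namely that refinement acts locally on choice nodes.

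For the base cases, $c = \nil$ trivially satisfies $\nil \isFinal$. For $c = \cod{a->b:x}$ and $c = \cod{ab?x}$, the branching structure $\sem{c}.\N$ contains only events as top-level children. Since the only refinement rules that actually reduce a branching structure, namely \rulechoice{} and \rulecongr{}, both require a choice node to fire, no refinement can eliminate these top-level events. Hence $\sem{c}.\N \not\sqsupseteq \emptyset$ and the implication is vacuously true. For $c = c_1^*$, the rule $c^* \isFinal$ discharges the conclusion directly, without appealing to any property of $\sem{c}$.

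For the cases $c = c_1 \seq c_2$ and $c = c_1 \parallel c_2$, we have $\sem{c}.\N = \sem{c_1}.\N \cup \sem{c_2}.\N$ with disjoint event sets by construction. The key step is a decomposition observation: if $\N_1 \cup \N_2 \sqsupseteq \N'$ where $\N_1$ and $\N_2$ share no events, then $\N' = \N_1' \cup \N_2'$ with $\N_1 \sqsupseteq \N_1'$ and $\N_2 \sqsupseteq \N_2'$. This is established by induction on the refinement derivation, since any single application of \rulechoice{} or \rulecongr{} acts on exactly one choice node, which belongs to either $\N_1$ or $\N_2$ by uniqueness of events. Instantiating $\N' = \emptyset$ forces $\N_1' = \N_2' = \emptyset$, giving $\sem{c_1} \isFinal$ and $\sem{c_2} \isFinal$. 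The induction hypothesis then yields $c_1 \isFinal$ and $c_2 \isFinal$, whence $c_1 \dagger c_2 \isFinal$.

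For the case $c = c_1 \alt c_2$, we have $\sem{c}.\N = \set{\set{\N_1, \N_2}}$ where $\N_i = \sem{c_i}.\N$. A derivation of $\set{\set{\N_1, \N_2}} \sqsupseteq \emptyset$ must at some point apply \rulechoice{} to the outer choice, since this is the only way to remove it; all other steps either refine inside the choice via \rulecongr{} or are \rulerefl{}/\ruletrans{}. Therefore the derivation factors as $\set{\set{\N_1, \N_2}} \sqsupseteq \set{\set{\N_1', \N_2'}} \sqsupseteq \N_i' \sqsupseteq \emptyset$ for some $i \in \set{1,2}$ with $\N_i \sqsupseteq \N_i'$. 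By transitivity $\N_i \sqsupseteq \emptyset$, so $\sem{c_i} \isFinal$, and the induction hypothesis gives $c_i \isFinal$, hence $c_1 \alt c_2 \isFinal$.

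The main obstacle is the decomposition property for disjoint unions and the analogous factorization for a top-level choice. Both are conceptually clear — refinement just resolves or rewrites choice nodes one at a time — but rigorously they require a careful induction on the length of refinement derivations, paying attention to the side-condition $\set{\N_1, \N_2} \notin \N$ in \rulechoice{} and \rulecongr{} and to the fact that the \npom{} grammar forbids events from occurring more than once in $\N$.
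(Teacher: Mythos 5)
Your proof is correct and follows the same route as the paper: structural induction on $c$, with the only real work being the decomposition of $\N_1 \cup \N_2 \sqsupseteq \emptyset$ over disjoint parts for $\seq$/$\parallel$ and the factorization through \rulechoice{} for $\alt$. The paper's own proof is just this induction (stated as ``all cases are straightforward'' with details deferred to the technical report), so your write-up matches it while supplying the auxiliary observations explicitly.
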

\begin{proof}[Proof sketch]
  This proof is by structural induction on $c$.
  All cases are straightforward.
\end{proof}

\begin{restatable}{theorem}{thmchorbisimpom}
\label{thm:chor-bisim-pom}
  Let $c$ be a dependently guarded choreography.
  Then $c \sim \sem{c}$.
\end{restatable}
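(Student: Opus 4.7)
The plan is to exhibit the bisimulation relation $\bisim = \{\tpl{c, \sem{c}} \mid c \text{ is dependently guarded}\}$ already announced in the prose preceding \Cref{lem:if-chor-step-then-pom-step}, and to verify that it is indeed a bisimulation by assembling the four preceding lemmas. Since the theorem assumes $c$ is dependently guarded, the initial pair $\tpl{c, \sem{c}}$ lies in $\bisim$, so once $\bisim$ is shown to be a bisimulation, $c \sim \sem{c}$ follows by definition.

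I would then take an arbitrary pair $\tpl{c', \sem{c'}} \in \bisim$ and verify the two clauses of the bisimulation definition. For the transition clause, a step $c' \tr{\ell} c''$ is matched by $\sem{c'} \tr{\ell} \sem{c''}$ via \Cref{lem:if-chor-step-then-pom-step}, and conversely a step $\sem{c'} \tr{\ell} R'$ is matched by some $c' \tr{\ell} c''$ with $R' = \sem{c''}$ via \Cref{lem:if-pom-step-then-chor-step}; in both directions the target pair has the shape $\tpl{c'', \sem{c''}}$ required to live in $\bisim$. For the termination clause, $c' \isFinal$ yields $\sem{c'} \isFinal$ by \Cref{lem:if-chor-term-then-pom-term}, and $\sem{c'} \isFinal$ yields $c' \isFinal$ by \Cref{lem:if-pom-term-then-chor-term}.

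The only subtlety — and the main obstacle I expect — is that the four lemmas quantify only over dependently guarded choreographies, so to close the induction I must check that every $c''$ produced by a reduction $c' \tr{\ell} c''$ of a dependently guarded $c'$ is itself dependently guarded. This requires a separate short lemma, proved by structural induction on $c'$ following the rules in \Cref{fig:chor-semantics-reduction}. The only non-trivial case is the loop rule $c_1^* \tr{\ell} c_1' \seq c_1^*$: the residual $c_1^*$ is literally the original loop and thus still dependently guarded, while $c_1'$ arises from a reduction of $c_1$, so all loops appearing in $c_1'$ were already present (as subterms) in $c_1$ and therefore inherit dependent guardedness from the assumption on $c_1^*$. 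All other reduction rules either discard structure or preserve it verbatim, so dependent guardedness is immediate. With this preservation lemma in hand, the bisimulation argument closes, and the theorem follows.
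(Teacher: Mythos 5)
Your proposal matches the paper's proof exactly: it exhibits the same relation $\bisim = \set{\tpl{c, \sem{c}} \mid c \text{ dependently guarded}}$ and discharges the two bisimulation clauses by citing \Cref{lem:if-chor-step-then-pom-step,lem:if-pom-step-then-chor-step,lem:if-chor-term-then-pom-term,lem:if-pom-term-then-chor-term}. Your additional observation that reduction must preserve dependent guardedness (so that the successor pairs really lie in $\bisim$) is a legitimate detail that the paper's proof leaves implicit, and your sketch of that preservation argument is sound.
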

\begin{proof}
  Recall the relation $\bisim = \set{\tpl{c, \sem{c}} \mid \text{$c$ is a dependently guarded choreography}}$.
  Let $\tpl{c, R} \in \bisim$.
  \begin{itemize}
    \item If $c \tr{\ell} c'$ then $R \tr{\ell} R'$ and $\tpl{c', R'} \in \bisim$ (\Cref{lem:if-chor-step-then-pom-step}).

    \item If $R \tr{\ell} R'$ then $c \tr{\ell} c'$ and $\tpl{c', R'} \in \bisim$ (\Cref{lem:if-pom-step-then-chor-step}).

    \item If $c \isFinal$ then $R \isFinal$ (\Cref{lem:if-chor-term-then-pom-term}).

    \item If $R \isFinal$ then $c \isFinal$ (\Cref{lem:if-pom-term-then-chor-term}).
  \end{itemize}
  Then $\bisim$ is a bisimulation relation and $c \sim \sem{c}$ (\cite{sangiorgi2011introduction}).
\end{proof}










\section{Conclusion}
\label{sec:conclusion}

We have defined a choreography language and its operational semantics (\Cref{fig:chor-syntax,fig:chor-semantics}) using the weak sequential composition and partial termination of Rensink and Wehrheim~\cite{DBLP:journals/acta/RensinkW01}, which is novel in the context of choreographies.
We have defined a model, \npoms{} (\Cref{def:branching-pomset}), which can compactly represent both concurrency and choices, and have defined its semantics (\Cref{fig:pom-semantics}).
We have shown that we can use \npoms{} to model choreographies (\Cref{fig:chor-to-pom}) and that this model is behaviourally equivalent to the operational semantics (\Cref{thm:chor-bisim-pom}).

We believe that \npoms{} can be further improved.
We mention three points in particular and then discuss related work.

\paragraph{Binary choices}

Our branching structure $\N$ only supports binary choices.
This matches the structure of choreographies, but it would be more natural to represent $c_1 \alt (c_2 \alt c_3)$ as a single choice between the pomsets $\sem{c_1}$, $\sem{c_2}$ and $\sem{c_3}$ instead of as two nested binary choices.
However, supporting arbitrary $n$-ary choices also requires some thought about how to change the rules for refinement (\Cref{fig:pom-semantics-refine}), in particular \rulechoice{}.
A naive change would be to simply have this rule use $i \in \set{1, \ldots, n}$ and $\set{\set{\N_1, \ldots, \N_n}}$ instead of its current binary rules, but this is not sufficient as this naive $n$-ary choice would not be equivalent to the same branches composed as nested binary choices.
For example, $c_1 \alt (c_2 \alt c_3)$ can partially terminate to $c_1 \alt c_2$ and its interpretation as a \npom{} can refine to $\sem{c_1 \alt c_2}$, but a \npom{} whose branching structure consists of a single ternary choice $\set{\set{\N_1, \N_2, \N_3}}$ would not be able to refine to $\set{\set{\N_1, \N_2}}$ as the rules would only allow it to refine all of its branches or discard all but one of them.
Properly supporting $n$-ary choices would thus also require a new rule that allows $\set{\set{\N_1, \ldots, \N_m}}$ to refine to choice between an arbitrary (non-empty) subset of its branches.

\paragraph{Partial order}

In \Cref{def:branching-pomset}, ${\leq}$ is defined as a relation on events such that its transitive closure is a partial order, rather than ${\leq}$ being a partial order itself as it is in traditional pomsets.
The need for this change arises from the update rule $R[\N]$ (\Cref{fig:pom-semantics-operations}) in our use case as choreographies.
Consider the \npom{} in \Cref{fig:branching-pomset}.
To match the operational semantics, we should be able to refine this pomset by discarding the $\cod{b->c:x}$ branch of the choice, after which $\cod{cd!x}$ should be minimal.
In our current rules the events $\cod{bc!x}$ and $\cod{bc?x}$ are removed along with their entries in ${\leq}$ and then $\cod{cd!x}$ is minimal.
However, if ${\leq}$ is a partial order, then since a partial order is transitive ${\leq}$ would also contain the entries $\cod{ab!x} \leq \cod{cd!x}$ and $\cod{ab?x} \leq \cod{cd!x}$ and, since these entries do not contain $\cod{bc!x}$ or $\cod{bc?x}$ but are obtained by transitivity, they are not removed.
Consequently, there would be no refinement that enables $\cod{cd!x}$.

In general, if $R_1 \sqsupseteq R_1'$ and $R_2 \sqsupseteq R_2'$ then it would not necessarily be true that $R_1 \seq R_2 \sqsupseteq R_1' \seq R_2'$, as $R_1 \seq R_2$ may contain dependencies obtained by transitivity which would still be present in its updated version but which cannot be derived in $R_1' \seq R_2'$.
We have no ready alternative.
In the case of choreographies it may suffice to provide a more sophisticated update rule which properly trims these unwanted dependencies, but since this relies on knowledge of how these dependencies were derived from choreographies it is difficult to see how this could be applied to \npoms{} in general.

\paragraph{Loops}

In \Cref{fig:chor-to-pom} a loop $c^*$ is encoded by infinitely unfolding it.
As such, \npoms{} do not currently provide a finite representation of infinite choreographies.
This remains a topic for future work, for which we envision two possible directions.
One possibility would be to add an explicit repetition construct to the branching structure (e.g., change the second grammatical rule to $\C = e ~|~ \set{\N_1, \N_2} ~|~ \N^*$) and expand the semantics and proofs accordingly.
Another possibility would be to explore the approach used in message sequence chart graphs~\cite{DBLP:journals/tcs/AlurEY05} and add a graph structure on top of the branching structure.

\paragraph{Related work}
Choreographies are typically used in a top-down workflow: the developer writes a
global view $C$ and decomposes it into its projections, such that the behaviour
of $C$ is \emph{behaviourally equivalent} to the parallel composition of its
projections. 
Examples of
this approach include workflows based on message sequence
charts~\cite{ITU-T:MSC,DBLP:journals/tcs/AlurEY05}, multiparty session
types~\cite{DBLP:conf/popl/HondaYC08,DBLP:journals/jacm/HondaYC16}, and
choreographic
programs~\cite{DBLP:conf/popl/CarboneM13,DBLP:journals/tcs/Cruz-FilipeM20}.
The choreographic language used in this paper assumes asynchronous communication between agents and includes a finite loop operator, borrowing from this literature the same notion of actions as interactions and their (parallel, sequential, and choice) composition.

Pomsets were initially introduced by Pratt~\cite{DBLP:journals/ijpp/Pratt86} for concurrent models and have been widely used, e.g., in the context of message sequence charts by Katoen and Lambert~\cite{DBLP:conf/fbt/KatoenL98}.
Recently Guanciale and Tuosto proposed two semantic frameworks for choreographies, one of which uses sets of pomsets~\cite{DBLP:journals/jlp/TuostoG18}.
They also note that the pomset framework exhibits exponential growth in the number of choices in a choreography, and they propose an alternative semantic framework using hypergraphs, which can compactly represent choices.
While the hypergraph framework is more compact, their pomset framework is simpler and, they believe, more elegant.
We agree with this analysis, and we aim to preserve the simplicity and elegance of the pomset framework by proposing a framework that avoids exponential growth in the number of choices while still being based on pomsets.
In another recent paper Guanciale and Tuosto use pomsets to reason over choreography realisability~\cite{DBLP:journals/jlap/GuancialeT19}.
This demonstrates the potential of using pomsets for semantic analysis, and we are investigating how to use our framework for similar analysis.

Other related work includes the usage of event structures in the context of binary session types by Castellan and Yoshida~\cite{DBLP:journals/pacmpl/CastellanY19} and multiparty by Castellani et al.~\cite{DBLP:conf/birthday/CastellaniDG19}.
Event structures and \npoms{} both feature a set of events with a causality relation and a choice mechanism.
The main difference between the two approaches is in the latter.
Choices in event structures are based on a conflict relation on events, where two events in conflict cannot occur together in an execution and one of the two must be chosen.
In contrast, we structure events in \npoms{} hierarchically.
In the latter work, Castellani et al. use prime event structures to encode global types. The conflict heredity property of prime event structures leads to event duplication, which is avoided in flow event structures and \npoms{}. We leave a detailed comparison between flow event structures and \npoms{} for future work.
We note that, given a \npom{}, one may construct an event structure by defining its conflict relation as all pairs of events that belong to different branches of some choice in the branching structure.

\subsubsection*{Acknowledgments}
\emph{Funding by G. Cledou and J. Proença:} European Regional Development Fund (ERDF), Operational Programme for Competitiveness and Internationalisation (COMPETE 2020), and Fundação para a Ciência e a Tecnologia (FCT): POCI-01-0145-FEDER-029946 (DaVinci).
\emph{Funding by S. Jongmans:} Netherlands Organisation of Scientific Research: 016.Veni.192.103.
\emph{Funding by J. Proença:} 
FCT, within the CISTER Research Unit:
UIDB/04234/2020.
FCT, I.P. (Portuguese Foundation for Science and Technology): PTDC/CCI-COM/4280/2021 (IBEX).
ERDF and FCT, Portugal 2020 Partnership Agreement, Norte Portugal Regional Operational Programme (NORTE 2020): NORTE-01-0145-FEDER-028550 (REASSURE).
ECSEL Joint Undertaking (JU): grant agreement No 876852 (VALU3S).

\bibliographystyle{eptcs}
\bibliography{ice2022,bib.bib}


\end{document}